\def\llncs{0}       
\def\showoverflow{0}
\def\setmargin{1}   
\def\papertype{1}   
\def\savespace{0}   
\def\myfontsize{1}  
\def\anonymous{0}   
\def\showkeywords{1} 
\def\showdate{0} 
\def\instplace{1} 
\def\instcompact{1} 
\def\addtoc{0}      
\def\fullversion{1} 
\def\authnotes{0}   
\def \cmid{\mspace{1.5mu}\vert\mspace{1.5mu}}
\def\appearsin#1{\gdef\@appearsin{#1}}
\def\maketitle{\par
 \begingroup
 \def\thefootnote{\arabic{footnote}}
 \def\@makefnmark{\hbox
 to 0pt{$^{\@thefnmark}$\hss}}
 \if@twocolumn
 \twocolumn[\@maketitle]
 \else \newpage
 \global\@topnum\z@ \@maketitle \fi\thispagestyle{plain}\@thanks
 \endgroup
 \setcounter{footnote}{0}
 \let\maketitle\relax
 \let\@maketitle\relax
 \gdef\@thanks{}\gdef\@author{}\gdef\@title{}\gdef\@appearsin{}
          \let\thanks\relax}
\def\@maketitle{\newpage
 \noindent \@appearsin
 \vskip 0.5in \begin{center}
 {\LARGE \@title \par} \vskip 1.5em {\large \lineskip .5em
\begin{tabular}[t]{c}\@author
 \end{tabular}\par}
 \vskip 1em {\normalsize \@date} \end{center}
 \par
 \vskip 1.5em}
\def\abstract{\if@twocolumn
\section*{Abstract}
\else \small
\begin{center}
{\bf Abstract\vspace{-.5em}\vspace{0pt}}
\end{center}
\quotation
\fi}
\def\endabstract{\if@twocolumn\else\endquotation\fi}
\DeclareMathAlphabet{\mathsl}{OT1}{cmr}{m}{sl}
\DeclareMathAlphabet{\mathsc}{OT1}{cmr}{m}{sc}
\DeclareMathAlphabet{\mathslbf}{OT1}{cmr}{bx}{sl}
\DeclareFontFamily{OT1}{pzc}{}
\DeclareFontShape{OT1}{pzc}{m}{it}%
             {<-> s * [1.150] pzcmi7t}{}
\DeclareMathAlphabet{\mathscript}{OT1}{pzc}{m}{it}
\newtheorem{thm}{Theorem}[section]
\newtheorem{lem}[thm]{Lemma}
\newtheorem{cor}[thm]{Corollary}
\newtheorem{propo}[thm]{Proposition}
\newtheorem{clm}[thm]{Claim}
\newtheorem{defn}[thm]{Definition}
\newtheorem{assm}[thm]{Assumption}
\newtheorem{rem}[thm]{Remark}
\newtheorem{obs}[thm]{Observation}
\newtheorem{egs}[thm]{Example}
\newtheorem{expr}{Experiment}
\newtheorem{fct}[thm]{Fact}
\newtheorem{cons}[thm]{Construction}
\newtheorem{nte}[thm]{Note}
\newenvironment{theorem}{\begin{thm}}{\end{thm}}
\newenvironment{lemma}{\begin{lem}}{\end{lem}}
\newenvironment{corollary}{\begin{cor}}{\end{cor}}
\newenvironment{proposition}{\begin{propo}}{\end{propo}}
\newenvironment{definition}{\begin{defn}}{\end{defn}}
\newenvironment{assumption}{\begin{assm}\begin{em}}{\end{em}\end{assm}}
\newenvironment{claim}{\begin{clm}\begin{rm}}{\end{rm}\end{clm}}
\newenvironment{remark}{\begin{rem}\begin{em}}{\end{em}\end{rem}}
\newenvironment{fact}{\begin{fct}\begin{em}}{\end{em}\end{fct}}
\newenvironment{construction}{\begin{cons}\begin{rm}}{\end{rm}\end{cons}}
\newtheorem{thm}{Theorem}
\newtheorem{lem}[thm]{Lemma}
\newtheorem{cor}[thm]{Corollary}
\newtheorem{propo}[thm]{Proposition}
\newtheorem{clm}[thm]{Claim}
\newtheorem{defn}[thm]{Definition}
\newtheorem{assm}[thm]{Assumption}
\newtheorem{rem}[thm]{Remark}
\newtheorem{fct}[thm]{Fact}
\newtheorem{cons}[thm]{Construction}
\renewenvironment{theorem}{\begin{thm}}{\end{thm}}
\renewenvironment{lemma}{\begin{lem}}{\end{lem}}
\newenvironment{sfigure}{\begin{figure}[t]\begin{small}}{\end{small}\end{figure}}
\newcommand{\secref}[1]{Section~\ref{#1}}
\newcommand{\thref}[1]{Theorem~\ref{#1}}
  \newcommand{\figref}[1]{Fig.~\ref{#1}}
  \newcommand{\figref}[1]{Figure~\ref{#1}}
\renewcommand{\eqref}[1]{\mbox{Equation~(\ref{#1})}}
\newlength{\saveparindent}
\newlength{\saveparskip}
\def\qed{{\hspace{1pt}\rule[-1pt]{3pt}{9pt}}
\end{rm}\addtolength{\parskip}{-0pt}
\setlength{\parindent}{\saveparindent}
\global\advance\proofqeded by 1 }
\def\qedenv{
\end{rm}\addtolength{\parskip}{-0pt}
\setlength{\parindent}{\saveparindent}
\global\advance\proofqeded by 1 }
\newenvironment{proof}%
 {\proofstart}%
 {\ifnum\proofqeded=\proofended~\qed\fi \global\advance\proofended by 1
  \medskip}
 {\proofenvstart}%
 {\ifnum\proofqeded=\proofended\qedenv\fi \global\advance\proofended by 1
  \medskip}
\def\proofstart{\@ifnextchar[{\@oprf}{\@nprf}}
\def\proofenvstart{\@ifnextchar[{\@osprf}{\@nsprf}}
\def\@oprf[#1]{\begin{rm}\protect\vspace{6pt}\noindent{\bf Proof of #1:\ }%
\addtolength{\parskip}{5pt}\setlength{\parindent}{0pt}}
\def\@osprf[#1]{\begin{rm}\protect\vspace{6pt}\noindent
\addtolength{\parskip}{5pt}\setlength{\parindent}{0pt}}
\def\@nprf{\begin{rm}\protect\vspace{6pt}\noindent{\bf Proof:\ }%
\addtolength{\parskip}{5pt}\setlength{\parindent}{0pt}}
\def\@nsprf{\begin{rm}\protect\vspace{6pt}\noindent%
\addtolength{\parskip}{5pt}\setlength{\parindent}{0pt}}
\newcounter{ctr}
\newcounter{ectr}
\newlength{\savejot}
\newenvironment{newmath}{\begin{displaymath}%
\setlength{\abovedisplayskip}{6pt}%
\setlength{\belowdisplayskip}{6pt}%
\setlength{\abovedisplayshortskip}{8pt}%
\setlength{\belowdisplayshortskip}{8pt} }{\end{displaymath}}
\newenvironment{neweqnarrays}{\begin{eqnarray*}%
\setlength{\abovedisplayskip}{-4pt}%
\setlength{\belowdisplayskip}{-4pt}%
\setlength{\abovedisplayshortskip}{-4pt}%
\setlength{\belowdisplayshortskip}{-4pt}%
\setlength{\jot}{-0.4in} }{\end{eqnarray*}}
\newenvironment{newequation}{\begin{equation}%
\setlength{\abovedisplayskip}{6pt}%
\setlength{\belowdisplayskip}{6pt}%
\setlength{\abovedisplayshortskip}{8pt}%
\setlength{\belowdisplayshortskip}{8pt} }{\end{equation}}
\newcommand{\authnote}[2]{
\ifnum\authnotes=1 
  \begin{center}
    \fbox{\begin{minipage}{5.7in}
      \textbf{#1 says:} #2
    \end{minipage}}
  \end{center} 
\fi
}
\newcommand{\calF}{{\cal F}}
\def\bits{\{0,1\}}
\def\getsr{\stackrel{{\scriptscriptstyle\$}}{\leftarrow}}
\def\next{\,;\,}
\newcommand{\secpar}{\lambda}
\newcommand{\adv}{\ensuremath{\mathcal{A}}\xspace}
\newcommand{\david}{\ensuremath{\mathcal{D}}\xspace}
\newcommand{\goliath}{\ensuremath{\mathcal{G}}\xspace}
\newcommand{\token}{\ensuremath{\mathcal{T}}\xspace}
\newcommand{\inp}{\ensuremath{x}\xspace}
\newcommand{\out}{\ensuremath{y}\xspace}
\newcommand{\sid}{\textsf{sid}}
\newcommand{\Fstate}{\ensuremath{F_{state}}\xspace}
\newcommand{\Fabort}{\ensuremath{F_{abort}}\xspace}
\newcommand{\Tstate}{\ensuremath{T_{state}}\xspace}
\newcommand{\Swait}{\textsf{wait}\xspace}
\newcommand{\Ssent}{\textsf{sent}\xspace}
\newcommand{\Sready}{\textsf{ready}\xspace}
\newcommand{\Sdead}{\textsf{dead}\xspace}
\newcommand{\Scom}{\textsf{committed}\xspace}
\newcommand{\SMcom}{\textsf{message committed}\xspace}
\newcommand{\SMopen}{\textsf{message opened}\xspace}
\newcommand{\SIcom}{\textsf{inputs committed}\xspace}
\newcommand{\Mcreate}[1]{(\textsf{create}, \textsf{sid}, \ensuremath{\goliath, \david, #1})\xspace}
\newcommand{\Moverwrite}[1]{(\textsf{overwrite}, \textsf{sid}, \ensuremath{\goliath, \david, #1})\xspace}
\newcommand{\Mcreated}{(\textsf{created}, \textsf{sid}, \ensuremath{\goliath, \david})\xspace}
\newcommand{\Mdeliver}{(\textsf{deliver}, \textsf{sid}, \ensuremath{\goliath, \david})\xspace}
\newcommand{\Mready}{(\Sready, \textsf{sid}, \ensuremath{\goliath, \david})\xspace}
\newcommand{\Mexecute}{(\textsf{execute}, \textsf{sid}, \ensuremath{\goliath, \david, \inp})\xspace}
\newcommand{\Mout}[1]{(\textsf{output}, \textsf{sid}, \ensuremath{\goliath, \david, #1})\xspace}
\newcommand{\Mincom}{(\textsf{communication}, \textsf{sid}, \ensuremath{\goliath, \david, m})\xspace}
\newcommand{\Mchoice}[1]{(\textsf{choice}, \textsf{sid}, \ensuremath{\goliath, \david, #1})\xspace}
\newcommand{\Mswitchoff}{(\textsf{switch off}, \textsf{sid}, \ensuremath{\goliath, \david})\xspace}
\newcommand{\Mswitchon}{(\textsf{switch on}, \textsf{sid}, \ensuremath{\goliath, \david})\xspace}
\newcommand{\Ftwo}{\ensuremath{\mathbb{F}_2}\xspace}
\newcommand{\Tout}[1]{(\textsf{output}, \ensuremath{#1})\xspace}
\newcommand{\Tchoice}[1]{(\textsf{choice}, \ensuremath{#1})\xspace}
\newcommand{\Tcommit}[1]{(\textsf{commit}, \ensuremath{#1})\xspace}
\newcommand{\Tchallenge}[1]{(\textsf{challenge}, \ensuremath{#1})\xspace}
\newcommand{\Tmatrix}[1]{(\textsf{matrix choice}, \ensuremath{#1})\xspace}
\newcommand{\Tvector}[1]{(\textsf{vector choice}, \ensuremath{#1})\xspace}
\newcommand{\Tcom}[1]{(\textsf{commitment}, \ensuremath{#1})\xspace}
\newcommand{\TMcom}[1]{(\textsf{message commitment}, \ensuremath{#1})\xspace}
\newcommand{\TMrev}[1]{(\textsf{reveal message}, \ensuremath{#1})\xspace}
\newcommand{\TMopen}[1]{(\textsf{opening}, \ensuremath{#1})\xspace}
\newcommand{\TMshares}[1]{(\textsf{shares opening}, \ensuremath{#1})\xspace}
\begin{document}

\def\titletext{Weakening the Isolation Assumption of Tamper-proof Hardware Tokens}
\def\shorttitletext{Weakening the Isolation Assumption of Tamper-proof Hardware Tokens}

\def\abstracttext{Recent results have shown the usefulness of tamper-proof hardware tokens as a setup assumption 
for building UC-secure two-party computation protocols, thus providing broad security guarantees and allowing 
the use of such protocols as buildings blocks in the modular design of complex cryptography protocols. All these 
works have in common that they assume the tokens to be completely isolated from their creator, but this is a 
strong assumption. In this work we investigate the feasibility of cryptographic protocols in the setting where the isolation of the hardware token is weakened. 

We consider two cases: (1) the token can relay messages to its creator, or (2) the creator can send messages to the token after it is sent to the receiver. We provide a detailed characterization for both settings, presenting both impossibilities and information-theoretically secure solutions.
}

\def\keywordstext{\textbf{Keywords\ifnum\llncs=1{.}\else{:}\fi}
Hardware Tokens, Isolation Assumption, UC security, One-Time Memory, Oblivious Transfer.}

\def\dateinfo{November 2014}

\ifnum\llncs=1
  \def\pubinfo{}
\else
  \def\pubinfo{A conference version appears at ICITS 2015. This is the full version.}
\fi

\def\Dowsley{Rafael Dowsley}
\def\DowsleyEmail{{\tt rafael.dowsley@kit.edu}}
\def\DowsleyWebpage{{\tt }}
\def\DowsleyDept{Institute of Theoretical Informatics}
\def\DowsleyCompany{Karlsruhe Institute of Technology}
\def\DowsleyAddress{Am Fasanengarten 5, Geb. 50.34, 76131 Karlsruhe, Germany}

\def\MuellerQuade{J\"{o}rn M\"{u}ller-Quade}
\def\MuellerQuadeEmail{{\tt mueller-quade@kit.edu}}
\def\MuellerQuadeWebpage{{\tt }}
\def\MuellerQuadeDept{Institute of Theoretical Informatics}
\def\MuellerQuadeCompany{Karlsruhe Institute of Technology}
\def\MuellerQuadeAddress{Am Fasanengarten 5, Geb. 50.34, 76131 Karlsruhe, Germany}

\def\Nilges{Tobias Nilges}
\def\NilgesEmail{{\tt tobias.nilges@kit.edu}}
\def\NilgesWebpage{{\tt }}
\def\NilgesDept{Institute of Theoretical Informatics}
\def\NilgesCompany{Karlsruhe Institute of Technology}
\def\NilgesAddress{Am Fasanengarten 5, Geb. 50.34, 76131 Karlsruhe, Germany}

\def\DowsleyThanks{
\DowsleyDept,
\DowsleyCompany,
\DowsleyAddress.
Email: {\tt \DowsleyEmail}.
}

\def\MuellerQuadeThanks{
\MuellerQuadeDept,
\MuellerQuadeCompany,
\MuellerQuadeAddress.
Email: {\tt \MuellerQuadeEmail}.
}

\def\NilgesThanks{
\NilgesDept,
\NilgesCompany,
\NilgesAddress.
Email: {\tt \NilgesEmail}.
}

\def\DowsleyInstitute{
\DowsleyDept \\
\DowsleyCompany \\
\DowsleyAddress \\
\email{\DowsleyEmail}
\ifnum\instcompact=1
; \MuellerQuadeEmail ; \NilgesEmail \fi
}

\def\MuellerQuadeInstitute{
\MuellerQuadeDept \\
\MuellerQuadeCompany \\
\MuellerQuadeAddress \\
\email{\MuellerQuadeEmail}
}

\def\NilgesInstitute{
\NilgesDept \\
\NilgesCompany \\
\NilgesAddress \\
\email{\NilgesEmail} 
}

\def\DowsleyThanksInline{
\DowsleyDept,
\DowsleyCompany \\
\DowsleyAddress \\
{\tt \DowsleyEmail}
\ifnum\instcompact=1
; \MuellerQuadeEmail ; \NilgesEmail \fi
}

\def\MuellerQuadeThanksInline{
\MuellerQuadeDept,
\MuellerQuadeCompany \\
\MuellerQuadeAddress \\
{\tt \MuellerQuadeEmail}
}

\def\NilgesThanksInline{
\NilgesDept,
\NilgesCompany \\
\NilgesAddress \\
{\tt \NilgesEmail} 
}

\ifnum\llncs=1
  \title{\titletext}
  \titlerunning{\shorttitletext}
  \ifnum\anonymous=1
    \author{}
    \authorrunning{}
    \institute{}
  \else
    \author{
      \ifnum\instcompact=0 
        \Dowsley\inst{1} \and
        \MuellerQuade\inst{2} \and
        \Nilges\inst{3}
      \else
        \Dowsley \and
        \MuellerQuade \and
        \Nilges
      \fi        
    }
    \institute{    
      \ifnum\instcompact=0 
        \DowsleyInstitute \and
        \MuellerQuadeInstitute \and
        \NilgesInstitute
      \else
        \DowsleyInstitute 
      \fi
    }
  \fi
  \maketitle

  \ifnum\savespace=1{\vspace{-0.5in}}\fi
  \begin{abstract}
  \abstracttext
  
  \vspace{0.05in}
  \noindent \keywordstext
  \end{abstract}
\else
  \ifnum\anonymous=1
    \date{}
    \appearsin{}
    \title{\textbf{\titletext}\vspace{-0.3in}}
    \author{}
    \maketitle
    \vspace{-0.4in}
    \begin{abstract}
    \abstracttext
    
    \ifnum\showkeywords=1 
      \vspace{0.15in}
      \noindent \keywordstext 
    \fi
    \end{abstract}\vspace{0.2in}
    \vspace{-0.2in}
  \else
    \appearsin{\pubinfo}
    \title{{\bf \titletext}\vspace{0.2in}}
    \author{
      \ifnum\instplace=0   
         {\sc \Dowsley}\thanks{~\DowsleyThanks} \and
         {\sc \MuellerQuade}\thanks{~\MuellerQuadeThanks} \and
         {\sc \Nilges}\thanks{~\NilgesThanks}
      \else
        \ifnum\instcompact=0 
          {\Dowsley$^{1}$} \and
          {\MuellerQuade$^{2}$} \and
          {\Nilges$^{3}$} \and
        \else
          {\Dowsley} \and
          {\MuellerQuade} \and
          {\Nilges}
        \fi
      \fi
    }
    \ifnum\showdate=1
      \date{\protect\vspace{0.2in} \dateinfo}
    \else
      \date{}
    \fi
    \maketitle

    \ifnum\instplace=1
      \vspace{-0.2in}
      \begin{center}
      \begin{small}
       \ifnum\instcompact=0
         $^{1}$~\DowsleyThanksInline \\
         $^{2}$~\MuellerQuadeThanksInline \\
         $^{3}$~\NilgesThanksInline \\
       \else
         \DowsleyThanksInline \\
       \fi
      \end{small}
      \end{center}
      \vspace{0.2in}
    \fi
    \begin{abstract}
    \abstracttext
    
    \ifnum\showkeywords=1 
      \vspace{0.15in}
      \noindent \keywordstext 
    \fi
    \end{abstract}
    \thispagestyle{empty}\newpage\setcounter{page}{1}
    \ifnum\addtoc=1
      \pagenumbering{roman}
      \newpage
      \def\baselinestretch{0.9}
      \tableofcontents
      \def\baselinestretch{1.0}
      \newpage
      \pagenumbering{arabic}
      \setcounter{page}{1}
    \fi
  \fi
\fi

\section{Introduction}

Tamper-proof hardware tokens are a valuable resource for designing cryptographic protocols. It was shown in a series of recent papers that tamper-proof hardware tokens can be used as a cryptographic setup assumption to obtain Universally Composable (UC)~\cite{FOCS:Canetti01} secure two-party computation protocols~\cite{EC:Katz07,EC:MovSeg08,TCC:GISVW10,C:GIMS10,TCC:DotKraMul11}, thus achieving solutions that are secure according to one of the most stringent cryptographic models and can be used as buildings blocks in the modular design of complex cryptography protocols. D\"{o}ttling et al.~\cite{TCC:DotKraMul11} showed that even a single tamper-proof hardware token generated by one of the mutually distrusting parties 
is enough to obtain information-theoretical security in the UC framework.

All these works have in common that the tokens are assumed to be completely isolated from their creator. In light of recent events this assumption becomes questionable at the least, apart from the fact that the tokens could contain internal clocks, which can be exploited in conjunction with the activation time to send information into the device (or to make the abort behavior dependent on the activation time, which is not modeled in the UC framework). We highlight that this problem lies skew to leakage and side-channel attacks, e.g.\ \cite{AC:BCGHKR11,ICALP:PraSahWad14}, where a malicious token receiver tries to extract some of the contents of the token, i.e.\ the \emph{tamper-resilience assumption} is weakened. In contrast, we consider a weakened \emph{isolation assumption}. A similar scenario was studied by Damg\aa rd et al.\ \cite{TCC:DamNieWic09}, but only for a bandwidth-restricted channel and computational security. They showed that a partial physical separation of parties, e.g.\ in a token with a low-bandwidth covert channel, allows to perform UC-secure multiparty computation under standard cryptographic assumptions. 

We consider an \emph{unrestricted} channel and information-theoretical security. In this scenario, communication in both directions between the token and its creator without any restriction obviously renders the token useless as a setup assumption. Thus, there remain two different kinds of communication that can be considered to weaken the isolation assumption: either the tokens' creator can send messages to the tokens, or the tokens can send messages to their creator. While we deem the first case to be more realistic, we consider both cases. We emphasize that these one-way channels are available only for malicious parties and thus are not used by the honest parties during the protocol execution. This scenario is not directly comparable with the one by Damg\aa rd et al.\ \cite{TCC:DamNieWic09}, since here a broadband communication channel is available, but it is only one-way.
This leads to the following question:

\begin{center}
	\emph{Is it possible to obtain UC-secure protocols even if there exists a broadband one-way communication channel between the tokens and their creator?}
\end{center}

In this work, we provide a broad characterization from a feasibility standpoint for both malicious incoming and outgoing communication between the tokens and their creator. For our solutions, we only require that one party can create hardware tokens. We thus call this party Goliath, while the receiver of the token is called David and cannot create tokens of its own. 

In more detail, we show that with one-way channels into the tokens, it is possible to basically use the One-Time Memory (OTM) protocol using two tokens of D\"{o}ttling et al. \cite{TCC:DotKraMul11} to obtain an information-theoretically UC-secure OTM with aborts (i.e., a malicious token creator can change the abort behavior of the token at runtime, which is unavoidable if one-way channels into the tokens are available) and we also provide a computationally UC-secure OTM protocol from a single token. Additionally, it is possible to obtain information-theoretically UC-secure Oblivious Transfer (OT) from a single hardware token. We prove an impossibility result for unconditionally secure OTM with a single token.

Concerning one-way  channels from the tokens to their creator, we show that it is impossible to obtain even information-theoretically secure OT. We provide an information-theoretically UC-secure commitment scheme, which can then be used to obtain a computationally UC-secure OTM protocol with known techniques \cite{C:PeiVaiWat08}.  

\textbf{Further related work.} 
Apart from the model of tamper-proof hardware as formalized by Katz~\cite{EC:Katz07}, also weaker models such as resettable hardware tokens were proposed, e.g.\ \cite{TCC:GISVW10}. With resettable hardware, it is not possible to obtain information-theoretically secure oblivious transfer~\cite{C:GIMS10}, while commitments are still possible~\cite{C:GIMS10,AC:DamSca13}. Thus, the main focus of this research direction are efficient protocols based on computational assumptions while minimizing the amount of communication and tokens~\cite{EC:ChaGoySah08,TCC:GISVW10,TCC:Kolesnikov10,AC:BCGHKR11,TCC:DMMN13,TCC:CKSYZ14}. 
Further results about hardware tokens can be founded in \cite{C:ChaPed92,C:Brands93,EC:CraPed93,C:IshSahWag03,TCC:GLMMR04,TCC:AAGPR14}.

Another UC hardware setup assumption are physically uncloneable functions (PUFs)~\cite{pap01,C:BFSK11,EC:OSVW13}, which have recently gained increasing interest. It was shown that PUFs can be used to achieve oblivious transfer~\cite{rue10} and UC-secure commitments~\cite{AC:DamSca13}. However, if the PUFs can be created maliciously, oblivious transfer is impossible~\cite{C:DFKLS14}.

\section{Preliminaries}

\subsection{Notation}
We use standard information-theoretic measures: by $H(\cdot)$ we denote Shannon entropy, $H(\cdot | \cdot)$ denotes conditional entropy and $I(\cdot;\cdot)$ denotes the mutual information. Let in the following $\secpar$ denote a security parameter. We use the cryptographic standard notions of negligible functions, as well as computational\slash statistical\slash perfect indistinguishability.

\subsection{Model}

We state and prove our results in the Universal Composability (UC) framework of Canetti~\cite{FOCS:Canetti01} that allows for arbitrary composition of protocols. In this framework an ideal functionality $\mathcal{F}$ that captures the desired security requirements has to be modeled. A protocol $\Pi$ that is supposed to instantiate $\mathcal{F}$ runs in the real world, where an adversary $\mathcal{A}$ can corrupt protocol parties. To prove the UC-security of $\Pi$, it has to be shown that there exists a simulator $\mathcal{S}$ that only interacts with the ideal functionality and simulates the behavior of any $\mathcal{A}$ in such a way that any environment $\mathcal{Z}$ that is plugged either into the real protocol or the simulated protocol cannot distinguish the real protocol run of $\Pi$ from a simulated one.\footnote{In the case of computational security we allow the simulator to be expected polynomial time.}
For our results we assume static corruption, i.e.\ the adversary cannot adaptively corrupt protocol parties.

\paragraph{Target Functionalities.} Ideally one would like to use tamper-proof hardware tokens to realize 
One-Time Memory (OTM)~\cite{C:GolKalRot08}, as in the case where
the token is modeled as being completely isolated from its creator~\cite{TCC:DotKraMul11}. See \figref{fig-otm} for the OTM
functionality definition. This primitive resembles oblivious transfer, but
the receiver can make his choice at any point in time and the sender
is not notified about this event. OTM allows to build One-Time 
Programs~\cite{C:GolKalRot08,TCC:GISVW10}.

\paragraph{Impossibility of Realizing OTMs.} Note that in the hybrid
execution with a token and a channel into the token, a dishonest sender \goliath has the
ability to send an abortion message to \token at any time,
thus changing its abort behavior. In
the ideal execution on the other hand, once the OTM functionality goes
to the $\Sready$ state, it is not possible to change its output/abort behavior
anymore. Therefore it is not possible to realize the OTM functionality 
based on tokens that can receive communication from a malicious \goliath.

\begin{figure}[tb]
\begin{small}
\begin{tabular}{|p{\textwidth}|}
\hline
\begin{center}
\textbf{Functionality} $\calF^{\text{OTM}}$
\smallskip
\end{center}\\
Parametrized by a security parameter $\secpar$. The variable \Fstate is initialized with $\Swait$.\\
\\
\textbf{Creation.} Upon receiving a message \Mcreate{s_0,s_1} from 
\goliath verify if $\Fstate=\Swait$ and $s_0, s_1 \in \bits^\secpar$; else abort. 
Next, set $\Fstate \gets \Ssent$, store $(\sid, \goliath, \david, 
s_0,s_1)$ and send the message \Mcreated to the adversary.\\
\\
\textbf{Deliver.} Upon receiving a message \Mdeliver from the adversary, verify that 
$\Fstate=\Ssent$; else abort. Next, set $\Fstate \gets \Sready$, 
and send \Mready to \david.\\
\\
\textbf{Choice.} Upon receiving a message \Mchoice{c} from \david 
check if $\Fstate=\Sready$; else abort. Next, set $\Fstate \gets \Sdead$ 
and send \Mout{s_c} to \david.\\
\\
\hline
\end{tabular}
\smallskip
\caption{The One-Time Memory functionality.}\label{fig-otm}
\end{small}
\end{figure}

\begin{figure}[tb]
\begin{small}
\begin{tabular}{|p{\textwidth}|}
\hline
\begin{center}
\textbf{Functionality} $\calF^{\text{OTM-with-Abort}}$
\smallskip
\end{center}\\
Parametrized by a security parameter $\secpar$.
The variable \Fstate is initialized with $\Swait$ and $\Fabort$ with $\top$. If any message other than
\Mswitchon is received while $\Fabort=\bot$, the functionality aborts.\\
\\
\textbf{Creation.} Upon receiving a message \Mcreate{s_0,s_1} from 
\goliath verify if $\Fstate=\Swait$ and $s_0, s_1 \in \bits^\secpar$; else abort. 
Next, set $\Fstate \gets \Ssent$, store $(\sid, \goliath, \david, 
s_0,s_1)$ and send the message \Mcreated to the adversary.\\
\\
\textbf{Overwrite.} Upon receiving a message \Moverwrite{s_0',s_1'} from 
$\adv$ verify if $\Fstate=\Ssent$ and $s_0', s_1' \in \bits^\secpar$; else abort. 
Set $s_0 \gets s_0' \next s_1 \gets s_1'$.\\
\\
\textbf{Deliver.} Upon receiving a message \Mdeliver from the adversary, verify that 
$\Fstate=\Ssent$; else abort. Next, set $\Fstate \gets \Sready$, 
and send \Mready to \david.\\
\\
\textbf{Choice.} Upon receiving a message \Mchoice{c} from \david 
check if $\Fstate=\Sready$; else abort. Next, set $\Fstate \gets \Sdead$ 
and send \Mout{s_c} to \david.\\
\\
\textbf{Switch Off.} Upon receiving a message \Mswitchoff from $\adv$ set $\Fabort \gets \bot$.\\ 
\\
\textbf{Switch On.} Upon receiving a message \Mswitchon from $\adv$, set $\Fabort \gets \top$.\\ 
\\
\hline
\end{tabular}
\smallskip
\caption{The One-Time Memory with Abort functionality.}\label{fig-otmwa}
\end{small}
\end{figure}

\paragraph{OTM with Abort.} Given the above fact that online changes
in the abort behavior are inherent in the setting with one-way communication
into the token, we introduce an OTM functionality 
with abort, see \figref{fig-otmwa}. For such a functionality, there is an initial 
delivering phase after which the adversary can only let the execution proceed correctly
or switch off the functionality whenever he wants (independent of David inputs); but he cannot change the values stored in the functionality.

\section{The Case of Incoming Communication}

We first show that the existing solution of D\"{o}ttling, Kraschewski and M\"{u}ller-Quade~\cite{TCC:DotKraMul11} for OTM with 2 tokens can be modified to UC-realize OTM with abort. Then we show that using a single token, it is impossible to obtain an information-theoretically secure OTM protocol, if Goliath can send messages to the token. We sketch how a information-theoretically UC-secure OT protocol from a single token can be obtained and give a construction of a compuationally UC-secure OTM protocol from a single hardware token.

The formalization of the ideal functionality for stateful tamper-proof hardware tokens in this section uses a wrapper functionality as in the previous works~\cite{EC:Katz07,EC:MovSeg08,TCC:DotKraMul11}, but as one-way communication from the token issuer to the token is now allowed, the wrapper functionality needs to be modified to capture this fact. A sender \goliath (Goliath) provides as input to $\calF_{\text{wrap-owc}}^{\text{stateful}}$ a deterministic Turing machine \token (the token). Note that stateful tokens can be hard-coded with sufficiently long randomness tapes. The receiver \david (David) can query $\calF_{\text{wrap-owc}}^{\text{stateful}}$ to run \token with inputs of his choice and receives the output produced by the token. The current state of \token is stored between consecutive queries. 
In addition, and in order to capture the one-way communication property, we add the possibility of Goliath sending messages to the token, in which case \token is run on the received string and changes to a new state. The complete description of the functionality is shown in \figref{fig-wrap}. This model captures the fact that on the one hand the token cannot send messages to its creator, and on the other hand David cannot access the code or the internal state of \token.

\begin{figure}[tb]
\begin{small}
\begin{tabular}{|p{\textwidth}|}
\hline
\begin{center}
\textbf{Functionality} $\calF_{\text{wrap-owc}}^{\text{stateful}}$
\smallskip
\end{center}\\
Parametrized by a security parameter $\secpar$ and a polynomial 
upper bound on the runtime $t(\cdot)$. The variable \Fstate is initialized with $\Swait$.\\
\\
\textbf{Creation.} Upon receiving a message \Mcreate{\token} from \goliath
where \token is a deterministic Turing machine, verify if $\Fstate=\Swait$; else 
ignore the input. Next, store (\sid, \goliath, \david, \token,$\Tstate$) where 
$\Tstate$ is the initial state of \token, set $\Fstate \gets \Ssent$ and 
send the message \Mcreated to the adversary.\\
\\
\textbf{Deliver.} Upon receiving a message \Mdeliver from the adversary, verify that $\Fstate=\Ssent$; else 
ignore the input. Next, set $\Fstate \gets \Sready$,  and send \Mready to \david.\\
\\
\textbf{Execution.} Upon receiving a message \Mexecute from \david 
where \inp is an input, check if $\Fstate=\Sready$ and if it is, 
then run $\token(\Tstate,\inp)$ for at most $t(\secpar)$ steps. Save 
the new state of \token in \Tstate, read the output \out from its output tape and 
send \Mout{\out} to \david.\\
\\
\textbf{Incoming Communication.} Upon receiving a message \Mincom
from \adv, run $\token(\Tstate,m)$ for at most 
$t(\secpar)$ steps. Save the new state of \token.\\
\\
\hline
\end{tabular}
\smallskip
\caption{The wrapper functionality allowing one-way communication.}\label{fig-wrap}
\end{small}
\end{figure}

\subsection{Unconditionally Secure OTM with Two Tokens}

\begin{figure}[tb]
\begin{small}
\begin{tabular}{|p{\textwidth}|}
\hline
\begin{center}
\textbf{Token - Random Values} $\token_{\text{Random}}$
\smallskip
\end{center}\\
Parametrized by a security parameter $\secpar$. The token is 
hardwired with a random vector $a \getsr \Ftwo^{2\secpar}$ and 
a random matrix $B \getsr \Ftwo^{2\secpar \times 2\secpar}$.
It is initialized with state $\Tstate =\Sready$.\\
\\
\textbf{Output.} Upon receiving a message \Tchoice{z} from \david 
check if $\Tstate=\Sready$; else abort. Next, set $\Tstate \gets \Sdead$,
compute $V \gets a \otimes z + B$ and send the message \Tout{V} to \david.\\
\\
\hline
\end{tabular}
\smallskip
\caption{The first token, which only contains random values.}\label{fig-tok1}
\end{small}
\end{figure}

\begin{figure}[tb]
\begin{small}
\begin{tabular}{|p{\textwidth}|}
\hline
\begin{center}
\textbf{Token - Inputs} $\token_{\text{Inputs}}$
\smallskip
\end{center}\\
Parametrized by a security parameter $\secpar$. The token is 
initialized with Goliath's inputs $s_0, s_1$, and the vector $a$ and
matrix $B$ that are used by $\token_{\text{Random}}$.
It is initialized in state $\Tstate =\Sready$.\\
\\

\textbf{Matrix Choice.} Upon receiving a message \Tmatrix{C} from \david 
check if $\Tstate=\Sready$ and $C \in \Ftwo^{\secpar \times 2 \secpar}$; 
else abort. Next, compute a matrix $G \in \Ftwo^{\secpar \times 2 \secpar}$ that is complementary 
to $C$ (i.e., $G$ is determined by $\secpar$ vectors of length $2 \secpar$ which are linearly independent
and $G$ spans a subspace of the kernel of $C$), and also compute $\tilde{a} \gets Ca$, $\tilde{B}\gets CB$. Set $\Tstate \gets \Scom$
and send the message \Tcom{G,\tilde{a},\tilde{B}} to \david.\\
\\

\textbf{Ciphertexts.} Upon receiving a message \Tvector{h} from \david 
check if $\Tstate=\Scom$ and $h \in \Ftwo^{2 \secpar} \setminus \{0\}$; 
else abort. Next, compute $\tilde{s_0} \gets s_0 + GBh$ and 
$\tilde{s_1} \gets s_1 + GBh+Ga$, set $\Tstate \gets \Sdead$ and send 
the message \Tout{\tilde{s_0},\tilde{s_1}} to \david.\\
\\
\hline
\end{tabular}
\smallskip
\caption{The second token, which stores Goliath's inputs.}\label{fig-tok2}
\end{small}
\end{figure}

\begin{figure}[tb]
\begin{small}
\begin{tabular}{|p{\textwidth}|}
\hline
\begin{center}
\textbf{Protocol}
\smallskip
\end{center}\\
Parametrized by a security parameter $\secpar$.\\
\\
\textbf{Deliver.} $\david$ waits until $\goliath$ send the tokens $\token_{\text{Random}}$ 
and $\token_{\text{Inputs}}$. Then he chooses a random matrix 
$C \in \Ftwo^{\secpar \times 2 \secpar}$ and sends the message 
\Tmatrix{C} to $\token_{\text{Inputs}}$ in order to get the answer \Tcom{G,\tilde{a},\tilde{B}}. 
After that, $\david$ picks a random vector $h \in \Ftwo^{2 \secpar} \setminus \{0\}$
and sends the message \Tvector{h} to $\token_{\text{Inputs}}$ in order 
to get the output \Tout{\tilde{s_0},\tilde{s_1}}.\\
\\
\textbf{Choice Phase.} When $\david$ gets his input $c \in \Ftwo$, he chooses $z \getsr \Ftwo^{2 \secpar}$
such that $z^Th=c$ and sends the message \Tchoice{z} to $\token_{\text{Random}}$
to get the output \Tout{V}. Then $\david$ checks if $CV=\tilde{a}z^T+\tilde{B}$. If it is not,
$\david$ aborts; otherwise, he outputs $s_c=\tilde{s_c}+GVh$.\\
\\
\hline
\end{tabular}
\smallskip
\caption{The unconditionally secure protocol that realizes $\calF^{\text{OTM-with-Abort}}$.}\label{fig-prot}
\end{small}
\end{figure}

Our solution is to use the non-interactive 
version of the protocol due to D\"{o}ttling, Kraschewski and 
M\"{u}ller-Quade~\cite{TCC:DotKraMul11}. The only function of Goliath 
in this protocol is creating the two tokens and sending them to 
David. David, on the other hand, interacts with both 
tokens in order to obtain his output and to check the correctness
of the protocol execution. Intuitively, one of the tokens is used to generate 
a commitment to the input values and to send the input values encrypted using 
one-time pads. The second token only contains a random affine function which
can be evaluated only a single time and allows David to recover the 
one-time pad key corresponding to one of the inputs.
The specifications of the tokens can be found in \figref{fig-tok1} 
and \figref{fig-tok2}. In the protocol David initially interacts with 
the token which has the inputs in order to obtain the commitments 
and the ciphertexts. After this point David considers the OTM 
as delivered. Then, whenever he wants to choose the input to be received, 
he simply queries the token that has the affine function 
on the appropriate input and obtains the one-time pad that he needs
in order to recover his desired value. The description of the protocol is presented in
\figref{fig-prot}.

The fact that the protocol securely realizes $\calF^{\text{OTM-with-Abort}}$
follows from a straightforward modification of the original security proof by D\"{o}ttling et al.~\cite{TCC:DotKraMul11},
which considered the same protocol but with isolated tokens and proved that it realizes 
$\calF^{\text{OTM}}$ (i.e., without aborts) in such scenario.

\begin{theorem}
In the model where a malicious Goliath is allowed to send messages to the token, the protocol presented in \figref{fig-prot} UC-realizes
the functionality $\calF^{\text{OTM-with-Abort}}$ with statistical security against a corrupted Goliath and perfect security against a corrupted David.
\end{theorem}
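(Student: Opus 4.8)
The plan is to give, for each corruption case, an efficient simulator $\mathcal{S}$ that interacts with $\calF^{\text{OTM-with-Abort}}$ and to show that no environment $\env$ can distinguish it from the real $\calF_{\text{wrap-owc}}^{\text{stateful}}$-hybrid execution. Since the protocol of \figref{fig-prot} is syntactically the non-interactive protocol of D\"{o}ttling, Kraschewski and M\"{u}ller-Quade~\cite{TCC:DotKraMul11}, I reuse their extraction and equivocation arguments essentially verbatim; the only genuinely new work is (a) handling the \textbf{Incoming Communication} interface of the wrapper and (b) checking that those extra messages do not destroy the information-theoretic hiding their proof relies on. The reason the target is $\calF^{\text{OTM-with-Abort}}$ rather than $\calF^{\text{OTM}}$ is exactly the observation made before the theorem: a corrupted $\goliath$ can feed the token an abort trigger at runtime, so the ideal functionality must allow the adversary to switch the execution off after delivery.

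\textbf{Corrupted David.} Here $\mathcal{S}$ plays the two honest tokens. On $\Tmatrix{C}$ it answers with a fresh $G$ complementary to $C$ and with $\tilde a,\tilde B$ consistent with uniformly random $a,B$; on $\Tvector{h}$ it records $h$ and answers with uniformly random $\tilde{s_0},\tilde{s_1}$ (distributed exactly as in the real run, since $GBh$ and $Ga$ act as one-time pads independent of what was already revealed). When $\david$ sends $\Tchoice{z}$ to $\token_{\text{Random}}$, the simulator computes $c\gets z^{T}h$, sends $\Mchoice{c}$ to $\calF^{\text{OTM-with-Abort}}$, obtains $s_c$, and outputs a matrix $V$ sampled uniformly subject to $CV=\tilde a z^{T}+\tilde B$ and $GVh=\tilde{s_c}+s_c$. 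Because stacking $C$ over $G$ yields an invertible $2\secpar\times 2\secpar$ matrix, such $V$ exist, and one verifies that in the real execution the conditional distribution of $V=a\otimes z+B$ given $\env$'s entire view (including $s_0,s_1$) is precisely uniform subject to these two linear constraints. Hence the two executions are identically distributed and security against a corrupted David is perfect; note that David's messages never reach the \textbf{Incoming Communication} interface, so this part is literally that of~\cite{TCC:DotKraMul11}.

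\textbf{Corrupted Goliath.} Now $\mathcal{S}$ internally emulates $\calF_{\text{wrap-owc}}^{\text{stateful}}$ towards $\goliath$ (thereby receiving the two token machines and every \textbf{Incoming Communication} message) and also runs the honest David's input-independent Deliver phase: it picks random $C,h$, feeds them to its emulated $\token_{\text{Inputs}}$, and obtains $G,\tilde a,\tilde B,\tilde{s_0},\tilde{s_1}$. It then extracts $\goliath$'s inputs by running $\token_{\text{Random}}$ on a probe $z_b$ with $z_b^{T}h=b$ for $b\in\{0,1\}$ (resetting the machine between probes) and setting $s_b\gets\tilde{s_b}+GV_bh$ whenever the resulting $V_b$ passes the check $CV_b=\tilde a z_b^{T}+\tilde B$; the binding/soundness analysis of~\cite{TCC:DotKraMul11} shows that, except with probability $2^{-\Omega(\secpar)}$ over $C,h$, this value is exactly what the honest David would output for choice $b$, and that the check outcome does not depend on which $z$ with $z^{T}h=b$ is used. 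It sends $\Mcreate{s_0,s_1}$ and $\Mdeliver$. Each \textbf{Incoming Communication} message $\Mincom$ from $\goliath$ is mirrored as follows: if it arrives before David's Deliver phase has completed, $\mathcal{S}$ re-runs the extraction on the updated token and forwards the possibly changed pair via \textbf{Overwrite}; if it arrives afterwards, $\mathcal{S}$ re-probes $\token_{\text{Random}}$ and, if David's check would now fail, sends \textbf{Switch Off}, sending \textbf{Switch On} again if a later message restores a passing behaviour. David's Choice-phase query to $\token_{\text{Random}}$ produces no message to $\goliath$ (the channel is one-way into the token), so it need not be simulated explicitly.

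\textbf{Main obstacle.} The delicate point is to justify the \textbf{Switch Off} decisions: that interface is independent of David's input, so I must show that David's real abort probability is statistically independent of $c$ — otherwise a selective-failure attack not captured by \textbf{Switch Off} would exist. This is where the one-way restriction is used: $h$ is information-theoretically hidden from both $\token_{\text{Random}}$ and $\goliath$, because $\calF_{\text{wrap-owc}}^{\text{stateful}}$ only carries messages from $\goliath$ into a token, the two tokens cannot communicate, and $\token_{\text{Inputs}}$ (the only machine that ever sees $h$) has no channel back to $\goliath$. Consequently the event ``$\token_{\text{Random}}$'s reply on $z$ makes David's check fail'' is a function of $z$ and of $\goliath$'s messages only, and since the honest David's $z$ is nearly uniform on $\Ftwo^{2\secpar}$ both when $c=0$ and when $c=1$, this event occurs with the same probability up to $2^{-\Omega(\secpar)}$ in the two cases, which is exactly what \textbf{Switch Off} is allowed to model. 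I expect this hiding argument, together with checking that no $\goliath$-to-token message lets $\token_{\text{Random}}$ learn $C$ (so that the ``passing check $\Rightarrow$ honest evaluation'' step of~\cite{TCC:DotKraMul11} still applies), to be the only place where a new but short argument is needed; the remaining steps — uniqueness of the extracted values when the check passes, and statistical closeness of the two transcripts — are the routine bookkeeping of~\cite{TCC:DotKraMul11} transported to the augmented wrapper.
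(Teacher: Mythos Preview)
Your proposal is correct and follows essentially the same approach as the paper's (sketch) proof: reuse the D\"{o}ttling--Kraschewski--M\"{u}ller-Quade simulator verbatim for a corrupted David, and for a corrupted Goliath extract $s_0,s_1$ by probing $\token_{\text{Random}}$ on both choice bits, arguing that $C$ and $h$ remain hidden from both Goliath and $\token_{\text{Random}}$ because the channel is one-way and $\token_{\text{Inputs}}$ has no outgoing link, and then re-probing after each \textbf{Incoming Communication} to drive \textbf{Switch Off}/\textbf{Switch On}. The only superfluous step is your use of \textbf{Overwrite} for messages arriving ``before David's Deliver phase has completed'': extraction is not possible until that phase finishes, so the simulator simply applies such messages to the emulated token state and extracts once at the end---no \textbf{Overwrite} is needed (and the paper does not use it).
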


\begin{proof} (Sketch) 
The correctness as well as the security against a corrupted David follow directly from D\"{o}ttling's et al. 
proof of security. In the case of the security against a corrupted Goliath, note that the OTM is considered 
delivered at the point in which David has received $(G,\tilde{a},\tilde{B},\tilde{s_0},\tilde{s_1})$ 
from $\token_{\text{Inputs}}$. From that point on, $\token_{\text{Inputs}}$ does not participate 
in the protocol anymore and it cannot send messages to the outside world. Hence 
neither Goliath nor $\token_{\text{Random}}$ know the matrix $C$ which is used for the 
commitments, so they can cheat in the commitment's opening phase only with negligible probability. Both of them
also do not know the value $h$, which is necessary together with $z$ in order to determine David's
input $x$. So the proof proceeds as in~\cite{TCC:DotKraMul11},
the only difference here is that Goliath can still send messages to $\token_{\text{Random}}$ at any point, and thus
he can modify the abort behavior. This can be dealt with by running D\"{o}ttling's et al. procedure to verify whether the token
is going to abort or not (i.e., running a copy of the token in its current state with random inputs)
after each incoming message from Goliath to the token. If the simulator notices that the abort behavior 
changed, he can make the appropriate change in $\calF^{\text{OTM-with-Abort}}$ by using the Switch Off/Switch On
commands.
\end{proof}

\paragraph{Sequential OTM with Abort.} As done by D\"{o}ttling et al.~\cite{TCC:DotKraMul11} for the OTM functionality, 
it is also possible to define a sequential version of the OTM-with-Abort functionality where there are many pairs of Goliath's 
inputs (i.e., there are multiple stages) which can only be queried sequentially by David. The functionality only needs to be modified to take
pairs of inputs which can be queried sequentially by David and to allow an adversary to specify which stages are active/inactive at 
any time (if an inactive stage is queried by David, then the functionality aborts). In this case the two token solution of 
D\"{o}ttling et al.~\cite{TCC:DotKraMul11} for sequential OTMs can be used. The security proof would be a straightforward 
modification of D\"{o}ttling et al.'s proof in the same line as done above.

\subsection{Impossibility of Unconditionally Secure OTM from a Single Token}

\begin{lemma} 
Assume that there is only one token and that a malicious token is not computationally bounded.
If a malicious Goliath is allowed to send messages to the token, then there is no protocol $\Pi$ that realizes 
OTM with information-theoretic security from this single token.
\end{lemma}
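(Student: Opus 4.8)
The plan is to pit against each other the two UC simulators that a secure $\Pi$ would have to admit: the simulator $\mathcal{S}_D$ against a corrupted David and the simulator $\mathcal{S}_G$ against a corrupted Goliath. I would actually establish the stronger statement that $\Pi$ cannot even realize $\calF^{\text{OTM-with-Abort}}$ --- any $\Pi$ realizing $\calF^{\text{OTM}}$ also realizes the more permissive $\calF^{\text{OTM-with-Abort}}$ via the same simulators, so this covers plain OTM a fortiori. The guiding idea is that, \emph{with a single token}, the code of $\mathcal{S}_D$ is itself a ``universal token'' that has not yet committed to any sender inputs, so a computationally unbounded malicious Goliath can load $\mathcal{S}_D$ into the one token and inject the real inputs $(s_0,s_1)$ over the incoming channel only \emph{after} delivery, by which point $\mathcal{S}_G$ has already been forced to freeze the ideal functionality on a pair that is independent of $(s_0,s_1)$.

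First I would extract the shape of $\mathcal{S}_D$. Since in $\calF^{\text{OTM-with-Abort}}$ the receiver obtains \Mready before the environment ever supplies a choice bit, an honest David's ``delivery'' interaction with the token is independent of $c$; hence $\mathcal{S}_D$ must reproduce that interaction without knowing $(s_0,s_1)$, and only afterwards --- once it has extracted $c$ from David's choice message --- does it query the functionality and obtain the single value $s_c$ that it embeds in its remaining token messages. By statistical indistinguishability this is robust under substitution: if one answers $\mathcal{S}_D$'s functionality query on the extracted $c$ with the value $s_c'$ of an \emph{arbitrary} pair $(s_0',s_1')$, the resulting interaction with an honest David is statistically close to a genuine run against an honest token holding $(s_0',s_1')$, so honest David does not abort and outputs $s_c'$ with overwhelming probability. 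This is the only place the hypothesis that a malicious token need not be efficient is used: $\mathcal{S}_D$ may be unbounded, and the malicious token will have to run it.

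Next comes the attack. Let the malicious Goliath create the single token with the code ``run $\mathcal{S}_D$, but whenever $\mathcal{S}_D$ would send a choice message $c$ to the functionality, answer it internally with $s_c'$ read off the string $m=(s_0',s_1')$ received over the incoming channel'' --- note this code does not contain $(s_0',s_1')$. Goliath then triggers delivery, lets David's delivery interaction finish, and only then sends the single incoming message $m=(s_0',s_1')$. The environment picks $c$ and $(s_0',s_1')$ uniformly and independently, lets David produce his output, and accepts iff David does not abort and outputs $s_c'$; by the previous paragraph the real execution is accepted with overwhelming probability. In the ideal execution, $\mathcal{S}_G$ must send a creation message $(\hat s_0,\hat s_1)$ to $\calF^{\text{OTM-with-Abort}}$ before it can trigger delivery (the functionality will not deliver from state $\Swait$, and the overwrite command is available only in state $\Ssent$); as $m$ arrives strictly after delivery, $(\hat s_0,\hat s_1)$, hence $\hat s_c$, is fixed before $\mathcal{S}_G$ ever sees $(s_0',s_1')$ and is statistically independent of $s_c'$. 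After delivery $\mathcal{S}_G$'s only remaining power is Switch Off, so in the ideal world David outputs either $\hat s_c$ or $\bot$, and the environment accepts with probability at most $2^{-\secpar}$. The resulting non-negligible distinguishing gap contradicts UC-security.

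The step I expect to be the main obstacle is making the first one watertight inside the UC formalism: one must argue that $\mathcal{S}_D$'s interaction with $\calF^{\text{OTM-with-Abort}}$, against an honest David, really does collapse to a single query that determines $s_c$ (handling a multi-round choice phase, and a simulator that may internally test several candidate choices before committing), and that replacing that query by the injected string $m$ preserves statistical closeness. It is also worth explaining why the argument genuinely needs a single token and does not refute the two-token protocol of the previous subsection: there the code of $\mathcal{S}_D$ typically makes its two simulated tokens coordinate --- e.g.\ it reuses inside one token a value that David sent only to the other --- whereas a malicious Goliath, who receives nothing back from either token, cannot reproduce that coordination between two physically separated, mutually non-communicating devices.
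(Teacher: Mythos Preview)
Your approach is correct and in fact a bit sharper than the paper's, though the underlying insight is the same. The paper argues directly via entropy: by the OTM semantics David may delay his choice arbitrarily, so at the moment just before the choice sender privacy forces both $H(s_0\mid\mathsf{view}'_\david)$ and $H(s_1\mid\mathsf{view}'_\david)$ to be essentially $\secpar$; a malicious Goliath then forwards his entire view to the single token over the incoming channel, and since the token now knows every protocol message exchanged so far, \emph{correctness} of $\Pi$ lets it complete the remaining interaction so that David accepts essentially any value $s'_c$ --- hence the inputs are not fixed at delivery time, contradicting the OTM functionality.

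Your argument replaces the somewhat informal step ``correctness lets the token continue for any $s'_c$'' by an explicit construction: the receiver-side simulator $\mathcal{S}_D$ \emph{is} precisely such a continuation strategy, and you ship it wholesale as the malicious token's code, feeding it the intended inputs only after delivery via the incoming channel. This buys two things the paper's proof does not: you rule out even the weaker $\calF^{\text{OTM-with-Abort}}$ (so the unavoidable online-abort issue is not what kills single-token OTM), and you get a clean explanation of why the argument does not refute the two-token protocol (the two-token $\mathcal{S}_D$ may let its simulated tokens share information that a real Goliath, who hears nothing back from either device, cannot relay between them). The price is the extra UC bookkeeping you already flag --- making precise that $\mathcal{S}_D$'s interaction with the functionality reduces to a single post-delivery query on the extracted $c$ --- whereas the paper's entropy argument sidesteps the simulators entirely.
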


\begin{proof} 
For the sake of contradiction assume that a correct and information-theoretically secure OTM protocol $\Pi$ from a single 
stateful token exists. Assume that the parties' inputs are chosen as $s_0,s_1 \getsr \bits^\lambda$ and $c \getsr \bits$. 
The sender's privacy of the OTM protocol should hold, i.e.\
\begin{eqnarray*} 
I(\mathsf{view}_\david; s_{1-c}) \leq\varepsilon	& \Leftrightarrow & H(s_{1-c})-H(s_{1-c}\cmid \mathsf{view}_\david) \leq\varepsilon \\
	& \Leftrightarrow &H(s_{1-c}\cmid \mathsf{view}_\david) \geq \lambda- \varepsilon,
\end{eqnarray*}
where $\mathsf{view}_\david$ is David's view of the protocol execution and $\varepsilon$ is a function that is negligible in the security parameter.

By definition of the OTM functionality David can choose his input $c$ at any time after he receives the token and Goliath should not 
learn when David queried the OTM functionality. So David can choose his input $c$ at a point in the future far after receiving the token,
when all initial communication between the parties is already finished, and then he interacts with the token to receive $s_c$. 
But then, at the moment right before David's choice $c$ is made, its entropy is still 1 from the point of view of all parties. Therefore, due to the sender's privacy, 
at this point it should hold that 
$$H(s_{0}\cmid \mathsf{view}'_\david) \geq \lambda - \varepsilon$$
and
$$H(s_{1}\cmid \mathsf{view}'_\david) \geq \lambda - \varepsilon,$$
where $\mathsf{view}'_\david$ is David's view of the protocol execution until this point. But if a malicious Goliath is allowed to 
send messages to the token, he can forward his complete view to the token. The token then gets to know all protocol interactions 
so far and due to the correctness of the OTM protocol (i.e., it should work for any pair of inputs in $\bits^\lambda$) he is able, for almost any $s'_c \in \bits^\lambda$, to find a strategy to follow
for the rest of the protocol that makes David accept $s'_c$. Hence the values $s_0$ and $s_1$ are not fixed up to the point 
when David inputs $c$. But in the OTM functionality the values $s_0$ and $s_1$ are fixed once it is sent, and thus we get a contradiction.
\end{proof}

%

\subsection{Unconditionally Secure OT with a Single Token} 
D\"{o}ttling et al.~\cite{TCC:DotKraMul11} also presented an unconditionally secure solution with one
token only, in which the interactions which are performed between David and $\token_{\text{Inputs}}$ in the previously described protocol are
instead performed between David and Goliath in an initial interactive phase that is used to send the commitments and the ciphertexts. 
Note that such a version of the protocol would not be secure in the setting where one-way communication is allowed
into the token since Goliath could simply forward the matrix $C$ to $\token_{\text{Random}}$, which would then
be able to open the commitments to any value and thus be able to change the outputs at any time. 
But we should mention that it is possible to obtain an oblivious transfer protocol with only one token by letting the single token 
act like $\token_{\text{Inputs}}$ in the above protocol and letting the interactions between David and $\token_{\text{Random}}$
be replaced by identical interactions between David and Goliath. The proof of security would follow in the same line as before since 
Goliath would never get to know $C$ and $h$. Note that the drawback of having to know the OT inputs before
sending the token can be easily overcome by performing the OTs with random inputs and derandomizing them afterwards.

\subsection{Computationally Secure OTM from a Single Token}\label{sec:comp}

\begin{figure}[tb]
\begin{small}
\begin{tabular}{|p{\textwidth}|}
\hline
\begin{center}
\textbf{Token} $\token$
\smallskip
\end{center}\\
Parametrized by a security parameter $\secpar$. The token is hardwired with the shares 
$(v_{i,0},v_{i,1})$ for $i=1, \ldots, \secpar$, the inputs $s_0,s_1$ and Goliath's public key
$pk$. It is initialized with state $\Tstate =\Sready$ and $j=0$.\\
\\
\textbf{Message Commitment.} Upon receiving a message \Tchallenge{k_j} from \david 
check if $\Tstate=\Sready$ and $k_j$ is a bit; else abort Set $j \gets j + 1$. If $j=\secpar$, then 
set $\Tstate \gets \SMcom$. Send the message \TMcom{v_{j,k_j}} to \david.\\
\\

\textbf{Inputs Commitment.} Upon receiving a message \Tcommit{\mathsf{crs},\sigma} from \david 
check if $\Tstate=\SMcom$ and if $\sigma$ is a valid signature of $\goliath$ on $\mathsf{crs}$; else abort. 
Set $\Tstate \gets \SIcom$. Commit to the values $s_0,s_1$ using a computationally 
UC-secure commitment protocol that uses the common reference string $\mathsf{crs}$ and send the 
commitments to \david. Let $d_0,d_1$ denote the information to open the commitments.\\
\\

\textbf{Output.} Upon receiving the message \textsf{output} from \david 
check if $\Tstate=\SIcom$; else abort. Next, set $\Tstate \gets \Sdead$ and 
execute with \david a computationally UC-secure oblivious transfer protocol using 
the common reference string $\mathsf{crs}$ and with inputs $(s_0 \| d_0,s_1\| d_1)$.\\ 
\\
\hline
\end{tabular}
\smallskip
\caption{The token for a computationally secure OTM protocol with a single token.}\label{fig-tok3}
\end{small}
\end{figure}

If one considers the scenario where only one token is available, it is possible to obtain a protocol that realizes 
$\calF^{\text{OTM-with-Abort}}$ with computational security. The idea is to compute as an initial step (i.e.\,during the delivery phase) 
the commitment functionality by using the token and interactions
between Goliath and David. With access to this commitment functionality it is possible to 
obtain a common reference string between David and the token\footnote{The common reference string 
is actually obtained by Goliath and David, but can be forwarded from Goliath to the token via David 
by using a digital signature to ensure that the value that the token obtains is exactly the same 
one that Goliath sent.}, which in turn allows to run a computationally 
secure UC-commitment protocol between them in order to commit to the input values. After receiving from the token the 
commitments to the input values, David considers the delivery complete, and whenever 
he wants to get his output he just executes an oblivious transfer protocol with the token
with his desired choice bit as input. He checks the correctness of the output using the commitment. 
The crucial point for the simulation to go through is that the simulator should be able
to extract the first commitment before its opening, so that he can choose the common 
reference string as he wishes. In order to accomplish that in face of a potentially 
malicious token which possibly only correctly answers queries to certain values, 
we will commit to a message $m$ by using $\secpar$ pairs of random shares $(v_{i,0},v_{i,1})$ 
where for each pair $v_{i,0}+v_{i,1}=m$. During the committing phase, 
$\david$ interacts with the token and can choose to receive either $v_{i,0}$ 
or $v_{i,1}$ for each pair. To open the commitment, $\goliath$ reveals all the shares.
The specification of the token can be found in \figref{fig-tok3} and of the protocol in \figref{fig-prot2}.

\begin{figure}[tb]
\begin{small}
\begin{tabular}{|p{\textwidth}|}
\hline
\begin{center}
\textbf{Protocol}
\smallskip
\end{center}\\
Parametrized by a security parameter $\secpar$.\\
\\
\textbf{Deliver.} $\goliath$ generates a pair of signing $sk$ and public $pk$ keys for a signature 
scheme. Then he picks a random message $m' \getsr \Ftwo^{\secpar}$ and random vectors
$v_{i,0} \getsr \Ftwo^{\secpar}$ for $i=1, \ldots, \secpar$ and sets $v_{i,1}=m'-v_{i,0}$. He creates the token 
$\token$ (described in \figref{fig-tok3}) with the hardwired vectors $(v_{i,0},v_{i,1})$, $s_0,s_1$ and $pk$, and sends it to $\david$. 
Upon receiving the token \token, $\david$ queries it with random bits $k_i$ for $i=1, \ldots, \secpar$ in order to 
get $v_{i,k_i}$. $\david$ picks a random message $m'' \getsr \Ftwo^{\secpar}$
and sends it to $\goliath$. Then $\goliath$ opens the commitment to $m'$ by sending all the shares 
$(v_{i,0},v_{i,1})$ to $\david$. $\david$ checks if $m'=v_{i,0}+v_{i,1}$ for all $i=1, \ldots, \secpar$, 
aborting the protocol if this is not the case. Both \goliath and \david use $m= m' + m''$ to generate a common reference 
string $\mathsf{crs}$. $\goliath$ signs $\mathsf{crs}$ with his signing key $sk$ and sends the signature $\sigma$ to $\david$.
$\david$ sends $\mathsf{crs}$ and $\sigma$ to $\token$ in order to receive the commitments to 
$s_0$ and $s_1$.\\
\\

\textbf{Choice Phase.} When $\david$ gets his input $c \in \Ftwo$, he sends the message \textsf{output} 
to $\token$ and executes a computationally UC-secure oblivious transfer protocol with the token using 
the common reference string $\mathsf{crs}$ and with input $c$ in order to get the output
$s_c \| d_c$, where $\|$ denotes concatenation. $\david$ checks the correctness of $s_c$ using the commitment that he 
received previously and the opening information $d_c$.\\
\\
\hline
\end{tabular}
\smallskip
\caption{The computationally secure OTM protocol using one token.}\label{fig-prot2}
\end{small}
\end{figure}

\begin{theorem}
In the model where a malicious Goliath is allowed to send messages to the token, the protocol presented 
in \figref{fig-prot2} UC-realizes the functionality $\calF^{\text{OTM-with-Abort}}$ with computational security.
\end{theorem}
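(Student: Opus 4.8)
The plan is to exhibit a simulator for each of the two non-trivial static corruption cases and to establish indistinguishability through a short hybrid argument whose steps are justified by (i) the computational UC security of the commitment and oblivious transfer sub-protocols run between David and $\token$, (ii) unforgeability of the signature scheme, and (iii) the hiding/binding together with equivocability/extractability of the $\secpar$-fold share commitment that fixes $\mathsf{crs}$. The one new ingredient relative to the isolated-token proofs of D\"{o}ttling et al.\ is that $\mathsf{crs}$ is not delivered by an external functionality but is jointly produced as $m=m'+m''$; the $\secpar$-fold share commitment is exactly the device that lets the simulator steer this value to one for which it holds the commitment and OT trapdoors. Continued incoming communication from a malicious Goliath to $\token$ is absorbed by the Switch Off / Switch On interface of $\calF^{\text{OTM-with-Abort}}$, as in the two-token protocol.

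\textbf{Corrupted Goliath.} The simulator internally plays honest David and the wrapper, so it has the code and state of $\token$. It runs the share-query phase as David would, but, having the token state in hand, it also runs a copy of $\token$ from the state before each query $j$ on the complementary bit $1-k_j$, thereby learning every pair $(v_{j,0},v_{j,1})$. If the pairs are pairwise inconsistent, David's later opening check fails except with probability $2^{-\Theta(\secpar)}$ and the simulator makes $\calF^{\text{OTM-with-Abort}}$ abort; otherwise it has learned the committed value $m'$. It then picks $m''$ so that $m=m'+m''$ is a $\mathsf{crs}$ carrying the commitment and OT trapdoors --- that this $m''$ is indistinguishable from a uniform one reduces to the indistinguishability of simulated from honest common reference strings for the sub-protocols. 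After forwarding $(\mathsf{crs},\sigma)$ to $\token$ as David does, the simulator extracts from the commitment sent by $\token$ the values $s_0,s_1$, sends \Mcreate{s_0,s_1} to $\calF^{\text{OTM-with-Abort}}$ and then \Mdeliver at the point David considers delivery complete. Whenever Goliath messages $\token$ afterwards, the simulator re-inspects the copied token state to decide whether the choice-phase OT would abort and toggles Switch Off / Switch On accordingly, exactly as in the two-token proof; since it knows the token's intended OT inputs $s_0\|d_0,s_1\|d_1$ it simulates that OT faithfully, and any deviation is either caught by David's opening check or already reflected in the abort bookkeeping.

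\textbf{Corrupted David.} Here the simulator plays Goliath and $\token$. In the share-query phase it answers query $j$ with a fresh uniform $v_{j,k_j}$ and commits to nothing more; once David sends $m''$ it selects the $\mathsf{crs}$ for which it holds the trapdoors, sets $m'=m-m''$, and only then defines the unseen shares $v_{j,1-k_j}=m'-v_{j,k_j}$ --- so ``Goliath's'' opening is perfectly consistent and passes David's checks, which is precisely the (unconditional) equivocability of the share commitment given that David has seen one share per pair. Signing $\mathsf{crs}$ and invoking unforgeability forces David to forward this same $\mathsf{crs}$ to $\token$ (or the token rejects), so the simulator then hands David equivocal commitments to the still-unknown $s_0,s_1$. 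When David enters the choice phase and runs the OT with $\token$, the simulator runs the OT simulator for a corrupted receiver to extract David's choice bit $c$, sends \Mchoice{c} to $\calF^{\text{OTM-with-Abort}}$, obtains $s_c$, equivocates the $c$-th input commitment to an opening $d_c$, and steers the OT output to $s_c\|d_c$; \Mdeliver is sent when the simulated $\token$ hands David the (dummy) input commitments. Indistinguishability follows from the CRS-model equivocability of the commitment and OT sub-protocols and the unconditional hiding of the share commitment.

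\textbf{Main obstacle.} The crux is the seam between the self-generated $\mathsf{crs}$ and the sub-protocol simulators: one must argue in the same breath that the $\secpar$-fold share commitment is binding enough that, against a malicious Goliath, the extracted $m'$ is the only value David can be driven to accept, yet hiding and equivocable enough that, against a malicious David, the simulator can retrofit an arbitrary $m'$ after seeing $m''$; and one must verify that injecting the trapdoor $\mathsf{crs}$ into the token's commitment and OT disturbs none of David's remaining consistency checks, with the signature gate pinning down exactly the string on which the simulator equivocates or extracts. The other delicate point is faithfully reproducing, through the Switch Off / Switch On interface alone, the abort behavior of a token that may deviate after having committed; this needs the OT simulator's sender-extraction so that the simulator learns the token's effective OT inputs, checks them against the extracted input commitments, and sets the functionality's abort flag accordingly --- after which the remaining hybrids are the standard UC-security reductions for the two sub-protocols.
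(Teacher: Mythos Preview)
Your proposal follows essentially the same route as the paper: in the corrupted-sender case both rewind the token on each share query to learn the complementary share, steer $m''$ so that the resulting $\mathsf{crs}$ carries the commitment/OT trapdoors, extract $s_0,s_1$ from the token's UC-commitments, and mirror post-delivery incoming messages through Switch~Off/Switch~On; in the corrupted-receiver case both answer the share queries with fresh randomness, fix $m'$ only after seeing $m''$ by backfilling the unseen shares, and then use equivocability of the UC-commitment together with receiver-extraction of the OT. Your explicit mention of signature unforgeability pinning the $\mathsf{crs}$ that reaches the token is a point the paper leaves implicit.

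The one place where your argument is looser than the paper's is the extraction rule for $m'$. You write that ``if the pairs are pairwise inconsistent, David's later opening check fails except with probability $2^{-\Theta(\secpar)}$'' and have the simulator abort in that case. That is not correct as stated: if a \emph{single} index $j$ has $\tilde v_{j,0}+\tilde v_{j,1}$ differing from the value $\hat m'$ Goliath eventually opens to, David detects this only with probability $1/2$, since Goliath does not see $k_j$ and can make exactly one of $\hat v_{j,0},\hat v_{j,1}$ agree with what the token output. Your simulator would then abort while the real David accepts with probability $1/2$, a distinguishing event. The paper handles this by setting the extracted $m'$ to the \emph{majority} among the $\secpar$ sums $\tilde v_{j,0}+\tilde v_{j,1}$ and arguing that any opening $\hat m'$ disagreeing with this majority forces at least $\secpar/2$ indices into the single-match case, each caught independently with probability $1/2$, so David rejects except with probability $2^{-\Theta(\secpar)}$. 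With this fix your proof coincides with the paper's.
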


\begin{proof}
The correctness of the protocol can be trivially verified. The simulation for the cases that both parties are corrupted or
no parties are corrupted are trivial. We describe below how the simulation proceeds in the other cases.

\paragraph{Corrupted Sender:} If Goliath is corrupted (and thus also the token), the simulator will simulate an interaction 
of the protocol with the adversary and has to extract both $s_0$ and $s_1$ from this interaction in order 
to give them as input for the OTM functionality. The main reason to do this is that the simulator should
be able to extract the value $m'$ before sending $m''$, so that he can choose the common reference string 
$\mathsf{crs}$ as he wishes, thus being able to create a trapdoor to extract $s_0$ and $s_1$ from the committed values.

We have that only Goliath can program the token, so the environment machine will provide the code to Goliath 
(and hence to the simulator). To extract the value $m'$ the simulator does the following. When the commitment step
happens, whenever David sends a valid message \Tchallenge{k_j} to receive a share $\tilde{v}_{j,k_j}$, the simulator first executes
the token with the input $1-k_j$, obtaining an answer $\tilde{v}_{j,1-k_j}$, and then resets the token to the point before this query and 
executes the token with input $k_j$ to obtain $\tilde{v}_{j,k_j}$ and forward it to David. Let $\tilde{m}'_j=\tilde{v}_{j,0}+\tilde{v}_{j,1}$. 
After all the $\secpar$ challenges are done, the simulator fixes $\tilde{m}'$ as the value that appeared more often in the
tuple $(\tilde{m}'_1, \ldots, \tilde{m}'_\secpar)$. He then chooses $m''=m-\tilde{m}'$ for any $m$ he wants.
Lets now analyze this extraction procedure. Let $(\hat{v}_{j,0},\hat{v}_{j,1})$ denote the values that Goliath
reveals in the opening phase. Note that the protocol will be aborted unless $\hat{v}_{j,0}+\hat{v}_{j,1}=\hat{m}'$
for all $j$ and some fixed message $\hat{m}'$. For any $j$, if $\hat{v}_{j,0}\neq \tilde{v}_{j,0}$ and 
$\hat{v}_{j,1}\neq \tilde{v}_{j,1}$ then the protocol will be aborted anyway and we do not need to worry 
about the extracted value. If for the majority of the $j$'s it holds that $\hat{v}_{j,0}= \tilde{v}_{j,0}$ and 
$\hat{v}_{j,1}= \tilde{v}_{j,1}$, then $\tilde{m}'=\hat{m}'$ and thus the extraction procedure works 
properly. The remaining case is the one in which at least half of the $j$'s are such that either $\hat{v}_{j,0}= \tilde{v}_{j,0}$ 
or $\hat{v}_{j,1}= \tilde{v}_{j,1}$, but not both equalities hold. For each such $j$ the probability that the opening check 
succeeds for this pair of vectors is $1/2$ since Goliath cannot get any information from the token. 
Therefore if half or more of the $j$'s are in this condition, the protocol will abort with overwhelming probability 
in the security parameter $\secpar$. 

Given that the extraction worked properly, the simulator can create the common reference string as he wishes and 
so he is able to have a trapdoor to extract the values $s_0$ and $s_1$ from the commitments and give them as input to 
the OTM functionality. To learn the abort behavior, the simulator simulates, at onset and also after 
each incoming message from Goliath to the token, a choice phase execution between David and the token.
The simulator can then use the Switch Off/Switch On commands to adapt $\calF^{\text{OTM-with-Abort}}$'s 
abort behavior properly.
\\

\paragraph{Corrupted Receiver:} If David is corrupted, the simulator gets to know all David's challenges $k_j$ in the first 
commitment. Hence, after seeing $m''$, he can choose any $m'$ he wants (and thus any resulting $m$ and $\mathsf{crs}$) and appropriate shares 
$(\hat{v}_{j,0},\hat{v}_{j,1})$ that are correct from David's point of view. By picking a common reference string together 
with an appropriate trapdoor, the simulator can learn the choice bit $c$ and query it to the functionality $\calF^{\text{OTM-with-Abort}}$ 
to learn $s_c$. Using the equivocability of the UC-commitment the simulator can find an appropriate opening information
$d_c$ and feed $s_c \| d_c$ to David in the OT protocol.
\end{proof}

Note that the above protocol can be trivially extended to the case of sequential OTMs.

\section{The Case of Outgoing Communication}

In the complementary problem, we consider tokens which have a one-way channel that allow them 
to send messages to Goliath, but which cannot receive any information from Goliath. In this 
scenario we would like to implement $\calF^{\text{OTM}}$. Note that in this case 
Goliath cannot control online the abort behavior of the token. We first show an 
impossibility result for unconditionally secure protocols and then present a computationally secure 
protocol using a single token.

\subsection{Impossibility of Information-Theoretically Secure OT(M)}

\begin{lemma} If the tokens can send messages to Goliath, then there is no protocol 
$\Pi$ that realizes OTM, or even oblivious transfer, with information-theoretic security.
\end{lemma}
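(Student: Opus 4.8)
The statement to prove is that, once the hardware tokens possess a one‑way channel to Goliath, no protocol $\Pi$ can realize oblivious transfer (hence also OTM, which is strictly stronger) with information‑theoretic security. The plan is to exhibit a legitimate attack that contradicts the conjunction of correctness, sender privacy and receiver privacy, so it suffices to refute even standalone information‑theoretic security. The guiding observation is that a malicious token, being a machine programmed by Goliath, can be instructed to forward to Goliath every message it receives from David; since Goliath knows the token's code and the token runs in polynomial time, he can recompute all of the token's replies. In effect the tokens provide Goliath with no secrecy, and the setting collapses to plain two‑party interaction between David and Goliath.

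\textbf{Receiver‑privacy side.} First I would fix the malicious Goliath strategy that behaves honestly but additionally has every token forward all incoming messages. This is a valid corrupted‑Goliath behaviour, so receiver privacy must hold against it: $I(\mathsf{view}_\goliath;c)\le\varepsilon$ with $\varepsilon$ negligible. The key point is that $\mathsf{view}_\goliath$ determines the full transcript $T$ that David sees — the direct messages exchanged with Goliath together with all of David's queries and, by recomputation from the token code, all tokens' responses. Hence by data processing $I(T;c)\le\varepsilon$, and (via a Pinsker‑type step, using that $c$ is uniform over $\bits$) the distribution of $T$ conditioned on $c=0$ is statistically negligibly close to that conditioned on $c=1$.

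\textbf{Sender‑privacy side (the attack).} Now I would describe a malicious, computationally unbounded David who runs the honest protocol with input $c$ and random tape $r_D$, obtaining transcript $T$ and honest output $s_c$. He then computes the distribution of the honest receiver's output in a random execution conditioned on (i) the receiver's input being $1-c$ and (ii) the transcript being exactly $T$; note David can do this, as he knows the honest generation procedure for the token and for Goliath's randomness. By correctness this conditional output equals $s_{1-c}$ except for a negligible fraction of transcripts $T$ under the input‑$(1-c)$ measure, hence — by the near‑independence of $T$ from $c$ established above — also except for a negligible fraction under the input‑$c$ measure actually governing David's execution. Thus David recovers $s_{1-c}$ with overwhelming probability from his view, so $I(\mathsf{view}_\david;s_{1-c})$ is close to $\lambda$ (with $s_0,s_1\getsr\bits^\lambda$), contradicting sender privacy. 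What makes this go through is precisely that every token is deterministic and David re‑uses the very same query sequence, so the reinterpreted execution with input $1-c$ is forced to see the same token responses recorded in $T$ — this is the step that fails in the isolated‑token model and is recovered here only because the queries are visible to Goliath and therefore cannot encode $c$.

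The same argument applies verbatim with several tokens (all forwarding) and, since OTM yields OT with identical security, it also rules out OTM. The point I expect to require the most care is the quantitative bridge: turning $I(\mathsf{view}_\goliath;c)\le\varepsilon$ into statistical closeness of the two transcript distributions, and then transporting the "conditional output $=s_{1-c}$" property from the input‑$(1-c)$ measure to the input‑$c$ measure via a union bound over the negligible set of bad transcripts. A more economical alternative presentation, which I would at least mention, is to note that the forwarding reduces the hybrid execution to plain two‑party computation between David and a Goliath who also simulates the token, whence information‑theoretically secure OT is impossible by the classical argument.
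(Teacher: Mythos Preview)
Your proposal is correct and follows the same route as the paper: have the malicious tokens forward their entire view to Goliath, so that Goliath's view contains the full transcript and the hybrid collapses to the plain two-party setting, where information-theoretic OT is classically impossible. The paper's (sketch) proof is literally your ``more economical alternative presentation''; your main argument just unpacks the classical two-party impossibility with explicit quantitative steps, which the paper takes for granted.
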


\begin{proof} (Sketch) The basic idea is that the malicious tokens send their complete view to Goliath after 
each interaction with David. Thus, independently of whether Goliath or some token receive the last 
protocol message, the combined view of Goliath and the tokens is available to a malicious Goliath.
This directly implies that an OT protocol with information-theoretical security is not possible, 
because the whole model collapses to the two-party case in the stand-alone setting. Either the complete 
transcript of the exchanged messages (which is available to a malicious Goliath) 
uniquely determines the choice-bit $c$ of David or a malicious David can obtain both input bits
 $(s_0,s_1)$, and in both cases the oblivious transfer security is broken.
\end{proof}

We remark that the crucial point here is that for oblivious transfer, it does not matter at which time Goliath gets the complete view,
 i.e.\ it does not matter whether some token or Goliath receive the last message. As soon as he learns the choice bit, 
 the protocol is broken. This argumentation, however, does not rule out information-theoretically UC-secure commitments.

\subsection{Unconditionally Secure Commitment with a Single Token}

The idea here is to commit to a message $m$ by using pairs of random shares $(v_{i,0},v_{i,1})$ 
such that for each pair $v_{i,0}+v_{i,1}=m$, the shares are known to both the token and Goliath.
The commitment phase is done by interactions between David and Goliath, where
for each pair David can choose to receive either $v_{i,0}$ or $v_{i,1}$. In order to guarantee the 
binding property, the opening phase is executed between David and the token: David receives 
an opening key from Goliath and forwards it to the token, who checks it and reveals all the shares 
to David. To guarantee that on the one hand David cannot guess the opening key correctly (and thus open 
the commitment whenever he wants), but on the other hand the opening key does not contain enough 
information to allow the token to learn David's choices during the commitment phase (and thus 
successfully open the commitment to any value), we have opening keys that are random
$\secpar$-bit strings and we use $2 \secpar$ pairs of random shares. This 
commitment scheme is secure, but not yet extractable. In order to get extractability, instead of committing to 
the message itself, we first use the $(\secpar, \secpar/2+1)$-Shamir's secret share scheme
to create $\secpar$ shares $(m_1, \ldots, m_{\secpar})$ of the message, then 
commit to each share using the above scheme (in the opening phase a single opening key of 
$\secpar$-bits is given to the token in order to open all the commitments), but we additionally make David ask the token 
to open $\secpar/2$ shares $m_{n_{1}}, \ldots m_{n_{\secpar/2}}$ (without sending the opening key) already in the commitment phase, which do not reveal any information about 
$m$. The specification of the token can be found in \figref{fig-tok4} and of the protocol in \figref{fig-prot3}.

\begin{figure}[tb]
\begin{small}
\begin{tabular}{|p{\textwidth}|}
\hline
\begin{center}
\textbf{Token} $\token$
\smallskip
\end{center}\\
Parametrized by a security parameter $\secpar$. The token is hardwired with the shares 
$(v_{n,i,0},v_{n,i,1})$ for $i=1, \ldots, 2 \secpar, n=1, \ldots, \secpar$ and an opening key $\mathsf{cok} \in \bits^\secpar$. 
It is initialized with state $\Tstate =\Sready$.\\
\\
\textbf{Shares Opening.} Upon receiving a message \Tchallenge{n_1, \ldots, n_{\secpar/2}} from \david 
check if $\Tstate=\Sready$ and $\{n_1, \ldots, n_{\secpar/2}\} \subset \{1, \ldots, \secpar \}$ are the specifications of 
the shares \david wants to be revealed; else abort. Set $\Tstate \gets \SMcom$. 
Send the message \TMshares{(v_{n_j,i,0},v_{n_j,i,1})_{j=1,\ldots, \secpar/2, i=1,\ldots,2\secpar}} to \david.\\
\\

\textbf{Message Opening.} Upon receiving a message \TMrev{\overline{\mathsf{cok}}} from \david 
check if $\Tstate=\SMcom$ and $\overline{\mathsf{cok}}=\mathsf{cok}$; else abort. Set $\Tstate \gets \SMopen$. 
Send the message \TMopen{(v_{n,i,0},v_{n,i,1})_{n=1,\ldots, \secpar, i=1,\ldots,2\secpar}} to \david.\\
\\
\hline
\end{tabular}
\smallskip
\caption{The token for the commitment protocol with outgoing communication.}\label{fig-tok4}
\end{small}
\end{figure}

\begin{figure}[tb]
\begin{small}
\begin{tabular}{|p{\textwidth}|}
\hline
\begin{center}
\textbf{Commitment Protocol}
\smallskip
\end{center}\\
Parametrized by a security parameter $\secpar$.\\
\\
\textbf{Commitment Phase.} $\goliath$ generates an opening key $\mathsf{cok} \getsr \Ftwo^{\secpar}$. Then he 
generates $\secpar$ shares $(m_1, \ldots, m_\secpar)$ of the message $m$ using Shamir's secret sharing scheme.
For each share $m_n$, $\goliath$ picks random vectors $v_{n,i,0} \getsr \Ftwo^{\secpar}$ for $i=1, \ldots, 2\secpar$ and sets $v_{n,i,1}=m_n-v_{n,i,0}$. 
He creates the token $\token$ (described in \figref{fig-tok4}) with the hardwired $\mathsf{cok}$ and vectors $(v_{n,i,0},v_{n, i,1})$
for $n=1,\ldots, \secpar, i=1,\ldots,2\secpar$, and sends it to $\david$. 
Upon receiving the token \token, $\david$ queries \goliath with random bits $k_{n,i}$ for $n=1,\ldots, \secpar, i=1, \ldots, 2\secpar$ 
in order to get $v_{n, i,k_{n,i}}$. Then $\david$ picks a random subset $\{n_1, \ldots, n_{\secpar/2}\} \subset \{1, \ldots, \secpar \}$ 
and asks the token to reveal $(v_{n_j,i,0},v_{n_j,i,1})$ for $j=1,\ldots, \secpar/2, i=1,\ldots,2\secpar$, which he checks against 
the information he received from \goliath; aborting if they do not match.\\
\\

\textbf{Opening Phase.} $\goliath$ sends to $\david$ the message shares $(m_1,\ldots, m_\secpar)$ and also the commitment opening key $\mathsf{cok}$, which \david 
forwards to \token in order to get all the shares $(v_{n, i,0},v_{n, i,1})$. $\david$ checks if $m_n=v_{n,i,0}+v_{n, i,1}$ for all $i$ and $n$, 
aborting the protocol if this is not the case. Then he reconstructs $m$ from the shares; aborting if $m$ is not uniquely 
determined by the shares. 
\\
\hline
\end{tabular}
\smallskip
\caption{The unconditionally secure commitment protocol using one token for the case of outgoing communication.}\label{fig-prot3}
\end{small}
\end{figure}

\begin{theorem}\label{thm:outgoingcom}
In the model where malicious tokens are allowed to send messages to Goliath,
the protocol presented in \figref{fig-prot3} UC-realizes the commitment functionality $\calF^{\text{COM}}$
with unconditional security.
\end{theorem}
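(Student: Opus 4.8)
(Plan.)
The plan is to describe, for each corruption pattern, a simulator $\mathcal{S}$ reproducing the real execution. The cases where both or neither party is corrupted are handled as usual, so the work lies in two cases: a corrupted receiver \david, where $\mathcal{S}$ must equivocate, and a corrupted sender \goliath together with \token, where $\mathcal{S}$ must extract the committed message already in the commitment phase. Throughout I would exploit the feature specific to the outgoing-communication model: \token \emph{cannot} receive anything from \goliath, so once \token has been created its behaviour is a fixed deterministic function of \david's queries. This is exactly what will rescue the binding property, even though \token may leak \david's choices back to \goliath.

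\textbf{Corrupted \david.} Here \goliath and \token are honest and emulated by $\mathcal{S}$, so $\mathcal{S}$ observes the challenge bits $k_{n,i}$ sent to \goliath and the subset $\{n_1,\dots,n_{\secpar/2}\}$ sent to \token. During the commitment phase $\mathcal{S}$ does not know $m$: it picks $\mathsf{cok}\getsr\Ftwo^{\secpar}$, answers each \goliath-query with a uniformly random $g_{n,i}$ playing the role of $v_{n,i,k_{n,i}}$, and for each $n$ in the queried subset additionally fixes a uniformly random share value $m_n$ together with a matching collection of pairs. Because the queried subset has size $\secpar/2$, which is below the reconstruction threshold $\secpar/2+1$, these shares are distributed independently of $m$, so this matches the real execution. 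When the opening is requested $\mathcal{S}$ learns $m$ from $\calF^{\text{COM}}$; it completes $\{m_n\}_{n\in\{n_1,\dots,n_{\secpar/2}\}}$ and the constraint $p(0)=m$ to the uniquely determined degree-$\secpar/2$ polynomial, reads off the remaining shares $\hat m_n$, and for each $n$ outside the subset sets the opening pair to $v_{n,i,k_{n,i}}=g_{n,i}$ and $v_{n,i,1-k_{n,i}}=\hat m_n-g_{n,i}$. I expect the resulting view to be distributed exactly as in the real protocol, giving a perfect simulation; the only deviation to bound is \david opening on its own before \goliath, which requires guessing $\mathsf{cok}$ and hence happens with probability at most $2^{-\secpar}$.

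\textbf{Corrupted \goliath.} Now $\mathcal{S}$ learns the program \token from the create message and emulates an honest \david interacting with the corrupted \goliath and with its local copy of \token; since $\mathcal{S}$ also emulates the wrapper, it may rewind that local copy freely. To extract, $\mathcal{S}$ runs the commitment phase, records the state of \token just before \david's shares-opening query, and then repeatedly rewinds \token to that state and issues the shares-opening query on fresh $\secpar/2$-subsets, collecting revealed pairs until at least $\secpar/2+1$ distinct shares are obtained; it Shamir-reconstructs a message $m$ and sends $(\textsf{commit},m)$ to $\calF^{\text{COM}}$. From the opening phase onward $\mathcal{S}$ keeps emulating honest \david, sending the opening to $\calF^{\text{COM}}$ exactly when the emulated \david accepts, and forwarding abortions of \token verbatim. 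What remains — and what I expect to be the hard part — is to show that, except with probability $2^{-\Omega(\secpar)}$, the message the emulated \david would accept in the opening phase equals the extracted $m$. This is where the frozen state of \token is essential: \token's reveals are a deterministic function of \david's subset and opening key, so any strategy of \goliath/\token that makes the ``committed value'' depend on which $\secpar/2$ shares were opened must use a share whose two-component pairs do not all sum to one common value, or a share answered inconsistently across subsets; in either case the honest \david's random $\secpar/2$-subset (and the bits $k_{n,i}$, which \token never sees) catch the discrepancy except with probability exponentially small in $\secpar$, using that any $\secpar/2$ opened shares carry no information but any $\secpar/2+1$ pin down $m$. Everything else — the equivocation analysis above, the unpredictability of $\mathsf{cok}$, and checking that $\mathcal{S}$'s rewinding of its local \token does not disturb the faithfully relayed token-to-\goliath leakage — is routine bookkeeping.
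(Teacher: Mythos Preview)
Your treatment of corrupted \david matches the paper and is correct.

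The gap is in your extractor for corrupted \goliath. You rewind \token on fresh $\secpar/2$-subsets and reconstruct from whatever it reveals, but you never test the rewound answers against the half-pairs $v_{n,i,k_{n,i}}$ that the simulator already holds from \goliath. A malicious \token can be programmed so that on some subsets it reveals pairs consistent with \goliath's values and on the remaining subsets it outputs arbitrary junk (or simply aborts). Since the actual $Q_0$ is uniform, the real \david aborts on a junk subset and proceeds on a good one; but conditioned on $Q_0$ being good, your rewound subsets are still uniform and hit the junk region with the same non-negligible probability. You would then splice junk shares into the reconstruction and extract a wrong message, while the emulated \david later accepts the correct one in the opening phase. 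Your sketch (``\david's random subset and the hidden $k_{n,i}$ catch the discrepancy'') does not apply here, because the discrepancy is confined to rewound runs that the real \david never performs.

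The paper's extractor closes exactly this hole: after $Q_0$ passes \david's check, it keeps rewinding and tests every rewound reveal against the half-pairs received from \goliath, accepting only a subset $Q_j$ that \emph{passes this consistency check}; it then reconstructs $\hat m$ from $Q_0$ together with $Q_j$. This filtering also supplies the running-time argument you omit: if $p$ is the probability that a uniform subset passes the check, then either $p$ is negligible (so the real \david almost surely aborted and extraction is moot) or the expected number of rewinds to find a second passing subset is $1/p$, yielding an expected-polynomial-time simulator (with a fixed exponential cutoff). As written, your loop has no such guarantee and need not terminate against a \token that aborts on the shares-opening query.
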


The proof is in Appendix \ref{sec:proofoutgoing}.

\subsection{Computationally Secure OTM with a Single Token}

For the case of computational security, it is possible to obtain an OTM protocol which uses only one token. The approach is 
briefly described below.
Using the ideas from the previous section the parties can compute the commitment functionality, which can then be used to establish 
a common reference string between David and the token. The common reference string in turn 
can be used to run computationally UC-secure commitments and OT protocols between the token and David. 
The token commits to the input values using the computationally UC-secure commitment protocol, 
at which point David considers the deliver complete. 
Afterwards, whenever David wants to obtain his output, he engages in a computationally UC-secure 
OT protocol with the token in order to get the desired output and the commitment verification 
information.

\begin{theorem}\label{thm:outgoingotmshort}
In the model where malicious tokens are allowed to send messages to Goliath,
there is a protocol using a single token which UC-realizes the functionality $\calF^{\text{OTM}}$
with computational security.
\end{theorem}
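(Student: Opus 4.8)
The plan is to lift the single-token computationally secure OTM of Section~\ref{sec:comp} (Figures~\ref{fig-tok3} and~\ref{fig-prot2}) to the outgoing setting, the only real change being the coin-tossing sub-protocol. There the reference string is fixed by having Goliath commit through the token to a random value; in the present model that commitment is useless, since a malicious token would leak David's challenges to Goliath and let him equivocate. Instead I would use the single-token unconditionally UC-secure commitment of Theorem~\ref{thm:outgoingcom} (Figures~\ref{fig-tok4} and~\ref{fig-prot3}), whose binding comes precisely from having the \emph{token} perform the opening. Concretely: during delivery Goliath commits via that protocol to a uniformly random $m_G$; David picks a random $m_D$ and, once Goliath opens $m_G$, both derive a reference string $\mathsf{crs}$ from $m_G\oplus m_D$; Goliath signs $\mathsf{crs}$ and David forwards $(\mathsf{crs},\sigma)$ to the token, which has Goliath's verification key hardwired, so that even a corrupted David cannot make the token use a string other than the agreed $\mathsf{crs}$. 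Under $\mathsf{crs}$ the token then computes a computationally UC-secure commitment to $s_0,s_1$ and sends $(\mathsf{com}_0,\mathsf{com}_1)$ to David, who now regards the token as delivered; when David later obtains input $c$, he runs a two-message computationally UC-secure OT with the token acting as sender on $(s_0\|d_0,\,s_1\|d_1)$, where $d_b$ opens $\mathsf{com}_b$, and as receiver on $c$, and outputs $s_c$ after checking $d_c$ against $\mathsf{com}_c$. All these CRS-model primitives exist under standard assumptions~\cite{C:PeiVaiWat08}.

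Correctness is immediate. For a corrupted Goliath (and hence a corrupted token) the simulator plays David: it runs the commitment of Theorem~\ref{thm:outgoingcom} and, using the extractor from that theorem's proof (which resets the token across distinct reveal-subsets to recover all shares), learns $m_G$ before it has to send $m_D$; it then picks $m_D$ so that $\mathsf{crs}$ is a reference string for which it holds the trapdoor $\tau$. Finishing the delivery phase exactly as honest David would, it uses $\tau$ to extract $(\tilde{s}_0,\tilde{s}_1)$ from the token's UC commitments and hands them to $\calF^{\text{OTM}}$ via the Creation interface. To decide on delivery it internally plays out a choice phase with a uniformly random $c'$, running the honest OT-receiver code against the token; if that run aborts it withholds \Mdeliver, otherwise it sends \Mdeliver. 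Binding of the UC commitment ensures that, whenever real David does not abort, the value he recovers equals $\tilde{s}_c$, matching what $\calF^{\text{OTM}}$ returns; and -- the delicate point -- because Goliath has no channel into the token, the token's behaviour is frozen at creation and the only $c$-dependent input it ever receives is David's OT-receiver message, which hides $c$ by receiver-security of the UC OT, so both the event ``David aborts'' and everything the token leaks to Goliath during the choice phase are, up to a negligible amount, independent of $c$. Hence the simulator's delivery decision and its simulation of the leakage are correct with overwhelming probability.

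For a corrupted David the simulator plays Goliath and the token. By the equivocation property of Theorem~\ref{thm:outgoingcom} (corrupted-receiver case) it runs the commitment phase without fixing a value, receives $m_D$, and then opens to an $m_G$ that makes $\mathsf{crs}$ a string whose trapdoor $\tau$ it knows, and signs this $\mathsf{crs}$. As the token it UC-commits to dummy values; when David sends his OT-receiver message it extracts $c$ using $\tau$, queries $\calF^{\text{OTM}}$ for $s_c$, uses the equivocation trapdoor of the UC commitment to produce an opening $d_c$ of $\mathsf{com}_c$ to $s_c$, and answers the OT with sender inputs whose $c$-th component is $s_c\|d_c$; David then accepts $s_c$, while hiding of the UC commitment and sender-privacy of the OT keep the (never opened) $\mathsf{com}_{1-c}$ indistinguishable from a real commitment to $s_{1-c}$. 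If David instead forwards some $\mathsf{crs}'\neq\mathsf{crs}$ to the token, the signature check fails except with negligible probability and the simulator just mimics the abort. Indistinguishability in both cases follows from a hybrid argument over the UC security of the commitment of Theorem~\ref{thm:outgoingcom}, of the CRS-model commitment and OT, and the unforgeability of the signature scheme.

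I expect the main obstacle to be exactly the argument flagged in the corrupted-Goliath case: rigorously establishing that, once David holds $(\mathsf{com}_0,\mathsf{com}_1)$, no strategy of the malicious token can make the protocol outcome -- abort versus success, and the value delivered -- depend on David's choice $c$, so that the simulator may irrevocably commit $\calF^{\text{OTM}}$'s output already at delivery time. This is precisely what separates the present theorem from the weaker $\calF^{\text{OTM-with-Abort}}$ of Section~3, and it hinges entirely on the receiver-privacy of the UC OT, including the subsidiary claim that the token's outgoing leakage during the choice phase carries no usable information about $c$; once that is in place, the remainder is routine bookkeeping over the security of the underlying commitment and OT.
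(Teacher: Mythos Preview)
Your proposal is correct and follows essentially the same approach as the paper: the protocol and simulators you describe match the construction and proof of \thref{thm:outgoingotm} in Appendix~\ref{sec:outcompcont} almost step for step. If anything, you are more careful than the paper on the point you flag as delicate---the paper's corrupted-Goliath simulator simply ``simulates a choice phase between David and the token with random inputs'' to learn the abort behaviour, without spelling out the receiver-privacy argument that makes this sound.
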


The description of the token and the protocol, as well as the security proof can be found in Appendix \ref{sec:outcompcont}.

\section{Conclusion}

In this work we investigated a weaker isolation model for tamper-proof hardware, namely one-way 
(broadband) communication channels are allowed either for the token creator to the tokens or in the 
opposite direction. In the case that the tokens can receive incoming communication from their creators 
we showed the following: (1) there is an unconditionally secure One-Time Memory (OTM) protocol using two tokens, 
(2) it is impossible to realize OTM with unconditional security from a single token, 
(3) there is an unconditionally secure oblivious transfer protocol using a single token, 
(4) there is a computationally secure OTM protocol using a single token. 
In the case that the tokens can send outgoing communication to their creator 
we showed the following: (1) it is impossible to realize OTM or oblivious transfer 
with unconditional security, (2) there is an unconditionally secure commitment protocol using a single token, 
(3) there is a computationally secure OTM protocol using a single token.

%
%

\begin{small}
\bibliographystyle{plain} 

\end{small}

\appendix

\section{Proof of \thref{thm:outgoingcom}}\label{sec:proofoutgoing}

\begin{proof}
The correctness of the protocol is trivial, as well as the simulation for the cases that both parties are corrupted or
no parties are corrupted. We describe below how the simulation proceeds in the other cases.

\paragraph{Corrupted Sender:} 
We have that only Goliath can program the token, so the environment machine will provide the code to Goliath 
(and hence to the simulator). To extract the value $m$ the simulator does the following when David asks 
the token to open $\secpar/2$ shares $m_{n_{1}}, \ldots m_{n_{\secpar/2}}$ of $m$ during the commitment step. Let $Q_0$ denote the subset that David chose to be opened.
First note that given the token's answer to any opening query and the commitment information received during the interaction with Goliath, 
it is possible to distinguish with overwhelming probability if the token opened the shares correctly or not. So the simulator first tests 
if the query $Q_0$ was answered correctly by the token. If $Q_0$ is not answered correctly by the token, we do not need to worry about the extraction since 
the protocol will be aborted anyway. If it was answered correctly, the simulator also runs internally other executions 
of this procedure for opening half of the shares, in each of them asking a random subset $Q_1, Q_2, \ldots$ (with $|Q_i|=\secpar/2$) of the shares to be opened 
(the token is started in each execution from the same state that it was in just before $Q_0$ and there is a fixed exponential 
upper bound on the number of executions that can be performed). The procedure is repeated until some $Q_j$ is answered 
correctly. Note that the probability that the token answers any of the queries $Q_0, Q_1, \ldots$ correctly is 
the same since they are chosen from the same distribution and let $p$ denote this probability. If $p$ is not negligible, then 
the expected numbers of iterations needed to find a second query $Q_j$ that is answered correctly is polynomial.
From $Q_0$ and $Q_j$ the simulator can recover $\hat{m}$ (the unique value that can possibly be accepted with 
non-negligible probability in the tests performed in the commitment's opening phase). 
If $m \neq \hat{m}$, we do not need to worry about the extraction since 
the protocol will be aborted anyway. Therefore the simulator can 
perform the extraction correctly with overwhelming probability. 
\\

\paragraph{Corrupted Receiver:} If David is corrupted, the simulator gets to know all David's challenges $k_{n,i}$ as well as which shares of 
$m$ were opened. Hence after the commitment phase he can still choose any $\hat{m}$ he wants 
and appropriate shares $(\hat{m}_1, \ldots, \hat{m}_\secpar)$ and $(\hat{v}_{n,i,0},\hat{v}_{n,i,1})$ that will be accepted by
David in the opening phase.
\end{proof}

\section{Computationally Secure OTM with a Single Token}\label{sec:outcompcont}

Here we describe the computationally secure OTM protocol with a single token in the case of outgoing communication. 
It uses the commitment protocol as a building block. We describe the full protocol for the sake of completeness.

As in the protocol presented in \secref{sec:comp}, the idea is to use initial interactions (i.e., during the delivering phase) between Goliath 
and David, and also between David and the token in order to compute the commitment functionality and use it 
to establish a common reference string between David and the token, which can then be use to 
run computationally UC-secure commitment and OT protocols between the token and David.
The token then commits to the input values, at which point David considers the deliver complete. 
Afterwards, whenever David wants to obtain his output, he engages in a computationally UC-secure 
OT protocol with the token in order to get the desired output and the commitment verification 
information. The specification of the token can be found in \figref{fig-tok5} and of the protocol in \figref{fig-prot4}.

\begin{sfigure}
\begin{tabular}{|p{\textwidth}|}
\hline
\begin{center}
\textbf{Token} $\token$
\smallskip
\end{center}\\
Parametrized by a security parameter $\secpar$. The token is hardwired with the shares 
$(v_{n,i,0},v_{n,i,1})$ for $i=1, \ldots, 2 \secpar, n=1, \ldots, \secpar$, the inputs $s_0,s_1$, Goliath's public key
$\mathsf{pk}$ and an opening key $\mathsf{cok} \in \bits^\secpar$. It is initialized with state $\Tstate =\Sready$.\\
\\
\textbf{Shares Opening.} Upon receiving a message \Tchallenge{n_1, \ldots, n_{\secpar/2}} from \david 
check if $\Tstate=\Sready$ and $\{n_1, \ldots, n_{\secpar/2}\} \subset \{1, \ldots, \secpar \}$ are the specifications of 
the shares \david wants to be revealed; else abort. Set $\Tstate \gets \SMcom$. 
Send the message \TMshares{(v_{n_j,i,0},v_{n_j,i,1})_{j=1,\ldots, \secpar/2, i=1,\ldots,2\secpar}} to \david.\\
\\

\textbf{Message Opening.} Upon receiving a message \TMrev{\overline{\mathsf{cok}}} from \david 
check if $\Tstate=\SMcom$ and $\overline{\mathsf{cok}}=\mathsf{cok}$; else abort. Set $\Tstate \gets \SMopen$. 
Send the message \TMopen{(v_{n,i,0},v_{n,i,1})_{n=1,\ldots, \secpar, i=1,\ldots,2\secpar}} to \david.\\
\\

\textbf{Inputs Commitment.} Upon receiving a message \Tcommit{\mathsf{crs},\sigma} from \david 
check if $\Tstate=\SMopen$ and if $\sigma$ is a valid signature of $\goliath$ on $\mathsf{crs}$; else abort. 
Set $\Tstate \gets \SIcom$. Commit to the values $s_0,s_1$ using a computationally 
UC-secure commitment protocol that uses the common reference string $\mathsf{crs}$ and send the 
commitments to \david. Let $d_0,d_1$ denote the information to open the commitments.\\
\\

\textbf{Output.} Upon receiving the message \textsf{output} from \david 
check if $\Tstate=\SIcom$; else abort. Next, set $\Tstate \gets \Sdead$ and 
execute with \david a computationally UC-secure oblivious transfer protocol using 
the common reference string $\mathsf{crs}$ and with inputs $(s_0 \| d_0,s_1\| d_1)$.\\ 
\\
\hline
\end{tabular}
\smallskip
\caption{The token for the OTM protocol with outgoing communication.}\label{fig-tok5}
\end{sfigure}

\begin{sfigure}
\begin{tabular}{|p{\textwidth}|}
\hline
\begin{center}
\textbf{Protocol}
\smallskip
\end{center}\\
Parametrized by a security parameter $\secpar$.\\
\\
\textbf{Deliver.} $\goliath$ generates a pair of signing $\mathsf{sk}$ and public $\mathsf{pk}$ keys for a signature 
scheme and also an opening key $\mathsf{cok} \getsr \Ftwo^{\secpar}$. Then he picks a random message $m' \getsr \Ftwo^{\secpar}$ 
and generates the $\secpar$ shares of it $(m'_1, \ldots, m'_\secpar)$ using Shamir's secret sharing scheme.
For each share $m'_n$, $\goliath$ picks random vectors $v_{n,i,0} \getsr \Ftwo^{\secpar}$ for $i=1, \ldots, 2\secpar$ and sets $v_{n,i,1}=m'_n-v_{n,i,0}$. 
He creates the token $\token$ (described in \figref{fig-tok4}) with the hardwired $s_0,s_1$, $\mathsf{pk}$, $\mathsf{cok}$ and vectors $(v_{n,i,0},v_{n, i,1})$
for $n=1,\ldots, \secpar, i=1,\ldots,2\secpar$, and sends it to $\david$. 
Upon receiving the token \token, $\david$ queries \goliath with random bits $k_{n,i}$ for $n=1,\ldots, \secpar, i=1, \ldots, 2\secpar$ 
in order to get $v_{n, i,k_{n,i}}$. Then $\david$ picks a random subset $\{n_1, \ldots, n_{\secpar/2}\} \subset \{1, \ldots, \secpar \}$ 
and ask the tokens to reveal $(v_{n_j,i,0},v_{n_j,i,1})$ for $j=1,\ldots, \secpar/2, i=1,\ldots,2\secpar$, which he checks against 
the information he received from \goliath; aborting if they do not match. $\david$ picks a random message $m'' \getsr \Ftwo^{\secpar}$
and sends it to $\goliath$. $\goliath$ sends to $\david$ the message shares $(m'_1,\ldots, m'_\secpar)$ and also the commitment opening key $\mathsf{cok}$, which \david 
forwards to \token in order to get all the shares $(v_{n, i,0},v_{n, i,1})$. $\david$ checks if $m'_n=v_{n,i,0}+v_{n, i,1}$ for all $i$ and $n$, 
aborting the protocol if this is not the case. Then he reconstruct $m'$ from the shares; aborting if $m'$ is not uniquely 
determined by the shares. Both \goliath and \david use $m= m' + m''$ to generated a common reference 
string $\mathsf{crs}$. $\goliath$ signs $\mathsf{crs}$ with his signing key $\mathsf{sk}$ and sends the signature $\sigma$ to $\david$.
$\david$ sends $\mathsf{crs}$ and $\sigma$ to $\token$ in order to receive the commitments to 
$s_0$ and $s_1$.\\
\\

\textbf{Choice Phase.} When $\david$ gets his input $c \in \Ftwo$, he sends the message \textsf{output} 
to $\token$ and execute with the token a computationally UC-secure oblivious transfer protocol using 
the common reference string $\mathsf{crs}$ and with input $c$ in order to get the output
$s_c \| d_c$. $\david$ checks the correctness of $s_c$ using the commitment that he 
received previously and the opening information $d_c$.\\
\\
\hline
\end{tabular}
\smallskip
\caption{The computationally secure OTM protocol using one token for the case of outgoing communication.}\label{fig-prot4}
\end{sfigure}

\begin{theorem}\label{thm:outgoingotm}
The protocol presented in \figref{fig-prot4} UC-realizes the functionality $\calF^{\text{OTM}}$
with computational security.
\end{theorem}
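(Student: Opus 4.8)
(Proof plan)
The plan is to prove this by following the blueprint of the computationally secure construction with incoming communication in \secref{sec:comp}, using the unconditionally secure commitment of \thref{thm:outgoingcom} as the primitive that bootstraps a common reference string $\mathsf{crs}$ shared between \david and \token. Correctness is immediate from the correctness of that outer commitment and of the computationally UC-secure commitment and OT sub-protocols that are subsequently run on $\mathsf{crs}$, so the work is in the simulation; the cases where both or neither party is corrupted are trivial, so I would treat only corrupted \goliath and corrupted \david. Throughout, indistinguishability would be argued by a hybrid that first replaces the outer commitment's behaviour by its statistical guarantee (from the proof of \thref{thm:outgoingcom}) and then invokes the UC security of the inner sub-protocols run with the simulator-programmed $\mathsf{crs}$.

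For corrupted \goliath (who then also supplies \token's code, handed to the simulator by the environment), the goal is to extract $s_0,s_1$ and give them to $\calF^{\text{OTM}}$. First I would recover $m'$ before the simulator has to send $m''$: exactly as in the corrupted-sender part of the proof of \thref{thm:outgoingcom}, when \david asks \token to open the random subset $Q_0$ of $\secpar/2$ of the share indices, the simulator checks whether \token answered consistently, and if so rewinds \token to the state just before $Q_0$ and retries with fresh random subsets $Q_1,Q_2,\dots$ until one is answered consistently as well; this is an expected-polynomial-time loop whenever \token's success probability is non-negligible, and from two consistently answered subsets the unique accept-able $\hat m'$ can be reconstructed, a wrong reconstruction being possible only when the real opening would abort anyway. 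Having fixed $m''=m-\hat m'$ for an $m$ (hence $\mathsf{crs}$) chosen together with an extraction trapdoor, the simulator runs the extractor of the computationally UC-secure commitment on the commitments \token sends, obtains $s_0,s_1$, and inputs them to $\calF^{\text{OTM}}$. Finally, because \goliath has no online channel to \token, \token's behaviour is fixed after delivery, so the simulator would once rewind-simulate a full delivery-and-choice execution with \token (with the OT receiver simulator playing \david): if that execution causes \david to abort, the simulator simply withholds \Mdeliver, which matches $\calF^{\text{OTM}}$ aborting on a premature choice message; otherwise it sends \Mdeliver and later simulates \david's choice-phase messages with the OT receiver simulator, \david's actual output $s_c$ coming from $\calF^{\text{OTM}}$ and passing the commitment check since the extraction was faithful.

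For corrupted \david, \token is honest and the simulator must produce \token's messages without knowing $s_0,s_1$ until a choice is made. Here I would use that the simulator sees all of \david's challenges $k_{n,i}$ and the opened subset $\{n_1,\dots,n_{\secpar/2}\}$, so, as in the corrupted-receiver part of \thref{thm:outgoingcom}, after seeing $m''$ it can still pick an arbitrary $\hat m'$, hence an arbitrary $m$ and $\mathsf{crs}$, with shares and vectors consistent with everything \david has seen. Choosing $\mathsf{crs}$ with a trapdoor, the simulator runs \token's computationally UC-secure commitment in equivocal mode on dummy values; in the choice phase it extracts \david's bit $c$ from his OT messages, queries $\calF^{\text{OTM}}$ for $s_c$, uses equivocability of the inner commitment to produce an opening $d_c$ to $s_c$, and runs the OT sender simulator so that \david receives $s_c\| d_c$. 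Indistinguishability follows from statistical hiding of the outer commitment (the $k_{n,i}$ and the half-opened shares carry no information about $m'$) together with the UC security of the inner protocols under the programmed $\mathsf{crs}$.

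I expect the main obstacle to be, as in \secref{sec:comp}, the extraction against an adversarially programmed \token that may answer only a sparse set of queries consistently: one must show the rewinding extractor runs in expected polynomial time and that any wrong extraction coincides with a real-execution abort. The remaining effort is bookkeeping: threading the simulator-chosen $\mathsf{crs}$ and its trapdoor into the UC simulators of the inner commitment and OT, verifying the state-machine transitions of \token match up, and checking that the statistical error of the outer commitment and the computational errors of the inner protocols sum to a negligible quantity.
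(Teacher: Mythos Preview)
Your proposal is correct and follows essentially the same approach as the paper: in the corrupted-sender case you extract $m'$ via the rewinding argument of \thref{thm:outgoingcom}, program $\mathsf{crs}$ with a trapdoor, extract $s_0,s_1$ from the UC commitments, and test the token's abort behaviour by a simulated choice phase; in the corrupted-receiver case you equivocate $m'$ after seeing $m''$, program $\mathsf{crs}$, run the inner commitment equivocally, extract $c$ from the OT, and feed back $s_c\|d_c$. Your write-up is in fact more detailed than the paper's own (rather terse) proof; the only cosmetic difference is that you phrase the abort handling as ``withhold \textsf{Deliver}'' whereas the paper simply says the simulator ``forwards the correct inputs'' after testing abort, but the underlying idea is the same.
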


\begin{proof}
The correctness of the protocol is trivial, as well as the simulation for the cases that both parties are corrupted or
no parties are corrupted. We describe below how the simulation proceeds in the other cases.

\paragraph{Corrupted Sender:} If Goliath is corrupted (and thus also the token), the simulator should be able 
to extract both $s_0$ and $s_1$ to give them as input to the OTM functionality. In short, the simulator should
be able to extract the value $m'$ before sending $m''$, so that he can choose the common reference string 
$\mathsf{crs}$ as he wishes, thus being able to create a trapdoor to extract $s_0$ and $s_1$ from the committed values. 
The extraction happens in the same way as in the proof in Appendix \ref{sec:proofoutgoing} and the 
expected polynomial time simulator can perform the extraction correctly with overwhelming probability. 

If the extraction works properly, the simulator is able to choose the common reference string and therefore is able to 
have a trapdoor that allows him to extract the values $s_0$ and $s_1$ from the commitments. To learn about the abort behavior 
during the choice phase, the simulator simulates a choice phase between David and the token with random inputs. Hence he is able to forward the correct 
inputs to the OTM functionality. \\

\paragraph{Corrupted Receiver:} If David is corrupted, the simulator gets to know all David's challenges $k_{n,i}$ in the first 
commitment as well as which shares of $m'$ were opened. Hence, after seeing $m''$, he can choose any $\hat{m}'$ he wants (and thus any resulting $m$ and $\mathsf{crs}$) 
and appropriate shares $(\hat{m}'_1, \ldots, \hat{m}'_\secpar)$ and $(\hat{v}_{n,i,0},\hat{v}_{n,i,1})$ that are correct from David's point of view. By picking a common reference string together 
with an appropriate trapdoor, the simulator can learn the choice bit $c$ and query it to the functionality $\calF^{\text{OTM}}$ 
to learn $s_c$. Using the equivocability of the UC-commitment the simulator can find an appropriate opening information
$d_c$ and feed $s_c \| d_c$ to David in the OT protocol.
\end{proof}

\end{document}